\documentclass[11pt,letterpaper]{amsart}

\usepackage{amssymb,latexsym,amsmath,extarrows, mathrsfs, amsthm,color,amsrefs}
\usepackage{graphicx}

\usepackage[margin=1in, centering]{geometry}
\usepackage[bookmarksnumbered, colorlinks, plainpages]{hyperref}
\usepackage{hyperref} %,hypertexnames=false,colorlinks,[pagebackref]
\hypersetup{
%    bookmarks=true,         % show bookmarks bar?
%    unicode=false,          % non-Latin characters in AcrobatÕs bookmarks
%    pdftoolbar=true,        % show AcrobatÕs toolbar?
%    pdfmenubar=true,        % show AcrobatÕs menu?
%    pdffitwindow=false,     % window fit to page when opened
%    pdfstartview={FitP},    % fits the width of the page to the window
%    pdftitle={My title},    % title
%    pdfauthor={Author},     % author
%    pdfsubject={Subject},   % subject of the document
%    pdfcreator={Creator},   % creator of the document
%    pdfproducer={Producer}, % producer of the document
%    pdfkeywords={keywords}, % list of keywords
%    pdfnewwindow=true,      % links in new window
    colorlinks=true,       % false: boxed links; true: colored links
    linkcolor=blue,          % color of internal links
    citecolor=magenta,        % color of links to bibliography
    filecolor=magenta,      % color of file links
    urlcolor=cyan           % color of external links
%    pagebackref=true
}

\usepackage{mathtools}
\mathtoolsset{showonlyrefs,showmanualtags}

\newtheorem{theorem}{Theorem}[section]

\newtheorem{lemma}[theorem]{Lemma}
\newtheorem{proposition}[theorem]{Proposition}
\newtheorem{remark}{Remark}[section]

\newtheorem{definition}{Definition}

%mathcalLIGRAPHIC

%GENERAL

\newcommand{\E}{\mathbb E}

\newcommand{\C}{\mathbb C}
\newcommand{\Z}{\mathbb Z}

\newcommand{\N}{\mathbb N}
\newcommand{\R}{\mathbb R}

\newcommand{\la}{\left\langle}
\newcommand{\ra}{\right\rangle}

\title[]{Multipoint}

\begin{document}

\date{\today}

\title[Multi-point correlation in $\mathbb{Z}^d$]{Decay of multi-point correlation functions in $\mathbb{Z}^d$}
\begin{abstract} 
We prove multi-point correlation bounds in $\mathbb{Z}^d$ for arbitrary $d\geq 1$ with symmetrized distances, answering open questions proposed by Sims-Warzel \cite{SW} and Aza-Bru-Siqueira Pedra \cite{ABP}.
As applications, we prove multi-point correlation bounds for the Ising model on $\mathbb{Z}^d$, and multi-point dynamical localization in expectation for uniformly localized disordered systems, which provides the first examples of this conjectured phenomenon by Bravyi-König \cite{BK}. 
\end{abstract}

\author{Rui Han}
\author{Fan Yang}
\maketitle

\section{Introduction}
Correlation functions arise from, among others, the study of quantum field theory and statistical mechanics, and are important measurements of the order of a system.
More specifically, correlation functions quantify how microscopic variables co-vary with one another on average across space and time.
For various systems, there have been intensive studies on the decay of the two-point correlation functions.
The goal of this paper is to use the two-point decay as input to study the decay of multi-point correlation functions for systems in arbitrary dimension.
For quasi-free states of a one-particle Hamiltonian, the time-dependent multi-point correlation function is the determinant of two-point functions.
For generalized Gaussian fields, the multi-point function can be written in terms of two-point functions via Wick rule.
One of our motivations comes from the Ising model on $\Z^d$.
Another motivation arises from the study of multi-point dynamical localization, which was proposed as a conjectured phenomenon in \cite{BK}, and was used as an assumption to study the growth of quantum memory.

We first present the focus of the paper in its simplest form.
Let $X=\{x_1,x_2,...,x_n\}\subset \Z^{d}$ and $Y=\{y_1,y_2,...,y_n\}\subset \Z^{d}$ be two sets of point configurations.
Throughout this paper, we will assume both $X$ and $Y$ consist of distinct points. For two points $x,y\in \Z^d$, let $|x-y|$ denote the Euclidean distance between them.
Let $H$ be a self-adjoint operator on $\ell^2(\Z^d)$.
One of the questions we consider, of its simplest form, can be formulated as:
suppose we have the decay of the two point function of the following form: 
\begin{align*}
|\la \delta_{x_j}, \rho(H) \delta_{y_k}\ra|\leq e^{-\eta |x_j-y_k|},
\end{align*}
for some positive constant $\eta$, some bounded function $\rho$ on $\R$ and any $1\leq j,k\leq n$.
What can be said about the decay of the multi-point correlation function, expressed through the determinant of the following matrix $M$?
\begin{align}\label{def:M_intro}
M=( \la \delta_{x_j}, \rho(H) \delta_{y_k} \ra)_{1\leq j,k\leq n}.
\end{align}
Such multi-point correlation functions have been studied in many papers, see e.g. \cite{BK,SW,ABP,ANSS}.
One of our goals of this paper is to study the decay of multi-point determinant in $\Z^d$, and in particular provide answers to open questions proposed in \cite{SW,ABP}.

In the rest of this section we present our main results. The main applications to the Ising model and dynamical localization are presented in Sections 2 and 3.
Before stating our results and related literature, we first introduce various distances between two sets of point configurations in $\Z^d$.

Our first result, Theorem \ref{thm:main2}, concerns the following symmetrized maximal distance:
\begin{align}\label{def:md}
D_m(X,Y):=\min_{\pi \in S_n} \max_{j=1}^n |x_j-y_{\pi(j)}|,
\end{align}
where $S_n$ is the symmetric group.

Our second result, Theorem \ref{thm:determinant}, concerns the symmetrized sum distance $D_s(X,Y)$ defined as:
\begin{align}\label{def:sd}
D_s(X,Y):=\min_{\pi\in S_n} \sum_{j=1}^n |x_j-y_{\pi(j)}|.
\end{align}

Another distance that we do not use in the present paper, but was used in the literature, is the Hausdorff distance:
\begin{align}
D_H(X,Y)=\max(\max_{x_j\in X}\mathrm{dist}(x_j, Y),\, \max_{y_k\in Y}\mathrm{dist}(y_k,X)),
\end{align}
where $\mathrm{dist}(w, V):=\min_{v\in V} |w-v|$ is the distance between a point $w$ and a set $V$. 
This distance was used in \cite{AW,ABP}. In particular \cite{ABP} extends some results of \cite{SW} to higher dimensions, in the $D_H$ distance setting.

It is clear that 
\begin{align*}
D_H(X,Y)\leq D_m(X,Y)\leq D_s(X,Y)\leq n D_m(X,Y).
\end{align*}
Hence the two distances $D_m$ and $D_s$ that we consider in this paper are stronger than $D_H$, see also some comments in \cite{AW}.

For $d=1$, Sims and Warzel \cite{SW} proved that $|\det M|$, for $M$ as in \eqref{def:M_intro}, decays with respect to the $D_m$ distance in both the deterministic case (when each entry of $M$ is a fixed scalar) and the disordered case (when each entry is a function and the result is in expectation). These results were further extended to arbitrary dimension, but with the weaker Hausdorff distance in \cite{ABP}.
The $D_m$ and $D_s$ distances remained a challenge in higher dimensions.
Sims and Warzel's proof \cite{SW}, and also a recent proof of dynamical localization for multi-particle systems in $d=1$ with respect to the $D_s$ distance \cite{BM}, both rely heavily on the fact that the points in $\Z^1$ are ordered, which can not be generalized to higher dimensions.
In higher dimensions, an additional difficulty lies in the complicated geometry involved.
It was proposed as an open question in \cite{SW,ABP} that whether there is a decay in arbitrary dimension with respect to the $D_m$ distance and even the stronger $D_s$ distance.

One of our main achievements in this paper is to give an positive answer to this question by developing new approaches in higher dimensions.
Our first main result proves a higher dimensional version of Theorems 1.1 and 1.2 of \cite{SW} with respect to the $D_m$ distance, in both deterministic and disordered cases, hence providing a direct answer to the open question mentioned above.
We formulate Theorem \ref{thm:main2} in the disordered case, but it clearly also holds in the deterministic case (when taking the $m^{\omega}_{jk}(t)$ below to be constant in $\omega$ and $t$).
For a matrix $M\in \C^{\ell\times m}$, the matrix norm $\|M\|$ is defined as its operator norm from $\ell^2(\C^m)$ to $\ell^2(\C^{\ell})$.

\begin{theorem}\label{thm:main2}
Let $K: [0,\infty)\to [0,\infty)$ be a non-negative valued function.
Suppose a matrix $M^{\omega}(t)=(m_{jk}^{\omega}(t))_{1\leq j,k\leq n}$ is such that each entry satisfies
\begin{align}\label{eq:2point_assume_main2}
 \E_{\omega}\, \sup_t |m_{jk}^{\omega}(t)|\leq C e^{-\mu K(|x_j-y_k|)},
\end{align}
for some constants $C,\mu>0$, and its matrix norm satisfies $\|M^{\omega}(t)\|\leq 1$ for any $\omega, t$.
Then we have the following determinant bound:
\begin{align}\label{eq:main2_1}
 \E_{\omega}\, \sup_t |\det M^{\omega}(t)| \leq C \left(\sum_{\substack{x\in X,\, y\in Y\\ |x-y|\geq D_m(X,Y)}} e^{-2\mu K(|x-y|)}\right)^{1/2}.
\end{align}
An alternate and stronger estimate is the following:
\begin{align}\label{eq:main2}
& \E_{\omega}\, \sup_t  |\det M^{\omega}(t)|\\
&\leq C \left(\sum_{\substack{(x_k, y_{\pi_0(k)})\notin \mathcal{C}}} e^{-2\mu K(|x_{j_0}-y_{\pi_0(k)}|)}+\sum_{\substack{(x_j, y_{\pi_0(j)})\in \mathcal{C}\\ (x_k, y_{\pi_0(k)})\notin \mathcal{C}}} e^{-2\mu K(|x_j-y_{\pi_0(k)}|)}\right)^{1/2}, \notag
\end{align}
where $\mathcal{C}$ is a cluster of points defined as in Definition \ref{def:cluster}, and $\pi_0$ is a minimal permutation as in Definition \ref{def:min_pi} and $j_0$ is defined in \eqref{def:j0}.
\end{theorem}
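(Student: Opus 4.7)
The plan is to combine a Laplace expansion with the Cauchy--Binet identity, then iterate inside a geometrically grown cluster to extract the minimality of $\pi_0$. By relabeling, assume for notational convenience that $\pi_0 = \mathrm{id}$, so $|x_j - y_j| \leq D_m(X,Y)$ with equality at $j = j_0$. Expanding along row $j_0$,
\begin{align*}
\det M = \sum_k (-1)^{j_0+k} m_{j_0,k} \det M^{(j_0,k)},
\end{align*}
and Cauchy--Schwarz yields $|\det M|^2 \leq \bigl(\sum_k |m_{j_0,k}|^2\bigr)\bigl(\sum_k |\det M^{(j_0,k)}|^2\bigr)$. By the Cauchy--Binet identity the second factor equals $\det(\tilde M \tilde M^*)$, where $\tilde M$ is $M$ with the $j_0$-th row deleted; since $\|\tilde M\| \leq \|M\| \leq 1$, this determinant is at most $1$. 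Hence $|\det M| \leq \bigl(\sum_k |m_{j_0,k}|^2\bigr)^{1/2}$.

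Taking $\E\sup_t$ and applying Jensen's inequality gives $\E\sup_t|\det M| \leq \bigl(\sum_k \E\sup_t |m_{j_0,k}|^2\bigr)^{1/2}$. Combining the pointwise bound $|m_{j_0,k}|\leq 1$ with the entry hypothesis \eqref{eq:2point_assume_main2} produces a bound on $\E\sup_t |m_{j_0,k}|^2$ of the form $Ce^{-2\mu K(|x_{j_0}-y_k|)}$; arranging this correctly is a delicate point, arising from pairing the linear decay hypothesis against the operator-norm pointwise bound (possibly via a further Cauchy--Schwarz in $\omega$ or by iterating the expansion one level deeper). This yields a preliminary estimate whose sum ranges over all $k$.

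To upgrade the preliminary estimate into \eqref{eq:main2}, note that some $k$ may satisfy $|x_{j_0}-y_k| < D_m$, so direct inclusion in \eqref{eq:main2_1} is not possible. Grow the cluster $\mathcal C$ of Definition~\ref{def:cluster} starting from $(x_{j_0}, y_{\pi_0(j_0)})$: adjoin any pair $(x_k, y_{\pi_0(k)})$ such that some already-included $x_j$ lies within distance $D_m$ of $y_{\pi_0(k)}$. For each such newly-added $k$, re-expand the relevant minor along the row of $x_k$, producing a new contribution $e^{-\mu K(|x_k - y_{\pi_0(k')}|)}$ for every outside neighbour $k'$. The critical consequence of the minimality of $\pi_0$ is that whenever $(x_j, y_{\pi_0(j)}) \in \mathcal C$ and $(x_k, y_{\pi_0(k)}) \notin \mathcal C$, one has $|x_j - y_{\pi_0(k)}| \geq D_m$; otherwise composing $\pi_0$ with the local transposition would strictly decrease $\max_l |x_l - y_{\pi_0(l)}|$, contradicting minimality. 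This yields both sums of \eqref{eq:main2}. The bound \eqref{eq:main2_1} is immediate since every pair $(x,y)$ appearing in \eqref{eq:main2} satisfies $|x-y|\geq D_m(X,Y)$.

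\paragraph{Main obstacle.} The heart of the matter is the higher-dimensional cluster bookkeeping: in $\Z^1$, Sims--Warzel \cite{SW} exploit the linear order to identify a well-structured ``nested'' cluster, but in $\Z^d$ with $d\geq 2$ the cluster can branch arbitrarily and one must ensure simultaneously that (i) the iterative expansion terminates in finitely many steps, (ii) the minimality-based exterior inequality $|x_j - y_{\pi_0(k)}|\geq D_m$ remains valid under all branchings of the cluster, and (iii) the accumulated Cauchy--Schwarz bookkeeping preserves the sharp decay rate $e^{-2\mu K}$ inside the final square-root sum rather than collapsing to $e^{-\mu K}$. The geometric/combinatorial step (ii) --- converting the minimality of $\pi_0$ into a uniform distance bound for arbitrary cluster shapes in $\Z^d$ --- is where the substantive new difficulty in higher dimensions lies.
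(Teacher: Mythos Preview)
Your Laplace/Cauchy--Binet opening is correct and is in fact the $|\mathcal C|=0$ special case of the paper's mechanism. The paper, however, does \emph{not} iterate row expansions through the cluster. It handles the whole cluster in one shot: after relabeling, it rearranges $M$ into a block form
\[
M\sim\begin{pmatrix}A&B\\ C&D\end{pmatrix},
\]
where the rows of $(A\;\,B)$ are indexed by $\{j_0\}\cup\{j:(x_j,y_{\pi_0(j)})\in\mathcal C\}$ and the columns of $A$ are indexed by $\{\pi_0(k):(x_k,y_{\pi_0(k)})\in\mathcal C\}$. Thus $A$ is $(m{+}1)\times m$ with $m=|\mathcal C|$, so it has a unit left null vector $\psi_1$; left-multiplying the top block of rows by a unitary with first row $\psi_1^T$ and applying Hadamard's inequality gives $|\det M|\le\|\psi_1^T B\|\le\|B\|\le\|B\|_F$ in a single step (Lemma~\ref{lem:la}). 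Every entry of $B$ has row index in $\{j_0\}\cup\mathcal C$ and column index outside $\mathcal C$, so by Lemma~\ref{lem:cr1} every entry of $B$ satisfies $|x_j-y_{\pi_0(k)}|\ge D_m$, which yields \eqref{eq:main2} and then \eqref{eq:main2_1} immediately.

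Your iterative scheme has a real gap at exactly the place you flag. Once you invoke Cauchy--Binet to bound $\sum_k|\det M^{(j_0,k)}|^2\le 1$, there are no minors left to ``re-expand'' for the bad $k$; the bound $|\det M|\le\bigl(\sum_k|m_{j_0,k}|^2\bigr)^{1/2}$ is final and contains the bad terms with $|x_{j_0}-y_k|<D_m$ irretrievably. If instead you refrain from collapsing the minors and recurse inside each bad $\det M^{(j_0,k)}$, you produce sums over chains in $\mathcal C$ of products $|m_{j_0,k_1}||m_{k_1,k_2}|\cdots|m_{k_{r-1},k_r}|$ in which only the terminal factor carries distance $\ge D_m$; reassembling these into a \emph{single} square-root of a sum with the target exponent is exactly what the rectangular-block/null-vector lemma achieves for free, and you have not indicated how to do it otherwise. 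Separately, your minimality argument via a ``local transposition'' is too weak: swapping $\pi_0(j)\leftrightarrow\pi_0(k)$ can create a new pair at distance $\ge D_m$, so one transposition need not contradict minimality of $\pi_0$. The paper's argument (Lemma~\ref{lem:cr1}) shows instead that if $(x_{j_0},y_{\pi_0(j_0)})\in\mathcal C$ then one can cycle $\pi_0$ along the entire defining chain $j_0\to a_1\to\cdots\to a_m\to j_0$, strictly decreasing $\mathcal N(\pi)$ while keeping $D_\pi\le D_m$; the chain structure, not a transposition, is what makes this work.
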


Our second main result concerns the $D_s$ distance.
\begin{theorem}\label{thm:determinant}
Let $X=\{x_1,...,x_n\}\subset \Z^d$ and $Y=\{y_1,...,y_n\}\subset \Z^d$. 
Let $M=(m_{jk})_{1\leq j,k\leq n}$ be a matrix with its each entry satisfying the following estimate
\begin{align}\label{eq:2point_pw}
|m_{jk}|\leq Ce^{-\mu |x_j-y_k|}.
\end{align}
Then we have 
\begin{align}\label{eq:est_in_thm_determinant}
\sum_{\pi \in S_n} \prod_{j=1}^n |m_{j\, \pi(j)}| \leq C_{\mu,d}^n\, e^{-\frac{\mu}{2\sqrt{d}} D_s(X,Y)}.
\end{align}
\end{theorem}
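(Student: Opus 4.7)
The plan is to bound the sum over $S_n$ by combining a matching lower bound on one factor of the exponent with a permanent-type overcounting bound on the other. First I would pass from the Euclidean to the $\ell^1$ distance: since $|x-y| \geq |x-y|_1/\sqrt{d}$ for all $x,y\in\Z^d$, the hypothesis \eqref{eq:2point_pw} gives $|m_{jk}|\leq C\,e^{-\nu|x_j-y_k|_1}$ with $\nu := \mu/\sqrt d$. The point of this switch is that the $\ell^1$ distance factorizes coordinatewise, which will make the row-sum bounds very clean.

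Next, for each $\pi\in S_n$ set $s_1(\pi):=\sum_j|x_j-y_{\pi(j)}|_1$ and split the exponent symmetrically:
\[
\prod_{j=1}^n e^{-\nu|x_j-y_{\pi(j)}|_1} \;=\; e^{-\nu s_1(\pi)/2}\cdot\prod_{j=1}^n e^{-\nu|x_j-y_{\pi(j)}|_1/2}.
\]
Since $|\cdot|_1\geq|\cdot|_2$ gives $s_1(\pi)\geq D_s(X,Y)$ for every $\pi$, the first factor is bounded uniformly by $e^{-\mu D_s(X,Y)/(2\sqrt d)}$, independent of $\pi$; this already produces the decay rate claimed in \eqref{eq:est_in_thm_determinant}. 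For the sum of the second factor over $\pi$, I would apply the elementary inequality, valid for any nonnegative matrix $B=(B_{jk})$,
\[
\sum_{\pi\in S_n}\prod_{j=1}^n B_{j,\pi(j)} \;\le\; \prod_{j=1}^n \sum_{k=1}^n B_{jk},
\]
which holds because the right side sums over all maps $[n]\to[n]$, not only over bijections. Taking $B_{jk}=e^{-\nu|x_j-y_k|_1/2}$ and extending each row sum to $\Z^d$,
\[
\sum_{k=1}^n e^{-\nu|x_j-y_k|_1/2}\;\le\;\sum_{y\in\Z^d} e^{-\nu|x_j-y|_1/2} \;=\;\Bigl(\sum_{z\in\Z} e^{-\nu|z|/2}\Bigr)^d \;=\; \bigl(\coth(\nu/4)\bigr)^d,
\]
a finite constant depending only on $\mu$ and $d$. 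Multiplying the two estimates and absorbing the initial $C^n$ from \eqref{eq:2point_pw} yields \eqref{eq:est_in_thm_determinant} with a constant of the form $C_{\mu,d}^n$.

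I do not foresee a substantive obstacle; the argument is short and essentially algebraic. The two design choices are the symmetric $1/2+1/2$ split of the exponent, which distributes the decay between the matching bound and the permanent bound, and the passage to $\ell^1$ distance, which is what makes the row-sum estimate both factorable and independent of $n$. The resulting rate $\mu/(2\sqrt d)$ reflects precisely these two losses.
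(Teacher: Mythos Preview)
Your argument is correct and considerably more direct than the paper's. The paper works with the $\ell^\infty$ distance and proves a counting lemma (their Lemma~5.1) bounding the number of permutations $\pi$ with $\sum_j|x_j-y_{\pi(j)}|_\infty=\ell$ by $C_d^n((2\ell+n)/n)^{dn}$; this requires Stirling's formula, a bound on lattice points on $\ell^\infty$-spheres, and a case split according to the size of $\tilde D_s$ relative to $n$. You bypass all of this by the single inequality $\sum_{\pi}\prod_j B_{j,\pi(j)}\le\prod_j\sum_k B_{jk}$ together with the coordinatewise factorization of $e^{-\nu|\cdot|_1/2}$, which gives an $n$-independent row-sum bound immediately (here the standing assumption that the $y_k$ are distinct is used to extend the row sum over $\Z^d$). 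Both routes land on the same rate $\mu/(2\sqrt d)$: in your argument the $1/2$ comes from the symmetric split and the $1/\sqrt d$ from the $\ell^2\to\ell^1$ comparison, while in the paper the $1/2$ comes from balancing the counting bound against the exponential and the $1/\sqrt d$ from the $\ell^2\to\ell^\infty$ comparison. Your approach is shorter and more robust; the paper's counting lemma, on the other hand, gives finer information on the level sets $\mathcal M_\ell$, which could be useful if one wanted sharper constants or a different decay profile.
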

The estimate \eqref{eq:est_in_thm_determinant} of Theorem \ref{thm:determinant} clearly implies the bounds for both the determinant and the {\it permanent} of the matrix.
It would be interesting to see whether the result of Theorem \ref{thm:main2} also holds for permanents, where a Hadamard type inequality for permanent was proved \cite{CLL}.

Although Theorem \ref{thm:determinant} is of deterministic nature, it can be applied to disordered systems as well, see Theorems \ref{thm:ULE_loc} and \ref{thm:loc}.
This deterministic result can also be directly applied to show multi-point exponential decay of correlations for thermal states of single-particle systems, where the exponential decay of the two-point function \eqref{eq:2point_pw} was proved by Aizenman and Graf in \cite{AG}, see a detailed discussion of this application in Remark 2 after Theorem 1.1 in \cite{SW}.

\

For Gaussian fields, the multi-point correlation function can be written in terms of two-point functions through the (bononic) Wick rule.
Our third main result aims at providing an exponential decay bound for multi-point correlation function for systems satisfying the Wick rule, which in particular applies to the Ising model on $\Z^d$ with high temperatures, see Theorem \ref{thm:Ising_d>4}.

For a set $X=\{x_1,...,x_{2n}\}$ of $2n$ points configuration in $\Z^d$, $d\geq 1$, let $D_s(X)$ be defined as follows:
\begin{align}\label{def:DsX}
D_s(X)=\min_{\pi\in S_{2n}}\sum_{j=1}^n |x_{\pi(2j-1)}-x_{\pi(2j)}|.
\end{align}
One can also define similarly $D_m(X)$ with $\sum_{j=1}^n$ replaced with $\max_{j=1}^n$.

\begin{theorem}\label{thm:Pfaffian}
Let $X=\{x_1,...,x_{2n}\}\subset\Z^d$ be a set of $2n$ points configuration in $\Z^d$ with $d\geq 1$. 
Let $\{m_{jk}\}_{1\leq j,k\leq 2n}$ be such that
\begin{align}
|m_{jk}|\leq C e^{-\mu |x_j-x_k|},
\end{align}
for some constants $C,\mu>0$.
Then we have
\begin{align}
\frac{1}{n!}\sum_{\pi\in S_{2n}} \prod_{j=1}^n |m_{\pi(2j-1)\, \pi(2j)}|\leq C_{d,\mu}^n e^{-\frac{\mu}{2\sqrt{d}}D_s(X)}.
\end{align}
\end{theorem}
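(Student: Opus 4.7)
The plan is to reduce Theorem \ref{thm:Pfaffian} to the deterministic permanent-type bound of Theorem \ref{thm:determinant}, by decomposing the sum over $S_{2n}$ first into a sum over perfect matchings of $\{1,\ldots,2n\}$ and then into bijections between complementary $n$-subsets.

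First, a counting: the map from $\pi\in S_{2n}$ to the unordered perfect matching $\{\{\pi(1),\pi(2)\},\ldots,\{\pi(2n-1),\pi(2n)\}\}$ is exactly $n!\cdot 2^n$ to one ($n!$ for reordering the pairs, $2^n$ for swapping within each pair), and the summand $\prod_{j=1}^n|m_{\pi(2j-1)\,\pi(2j)}|$ depends only on this matching. Hence
\begin{align*}
\frac{1}{n!}\sum_{\pi\in S_{2n}}\prod_{j=1}^n|m_{\pi(2j-1)\,\pi(2j)}| \;=\; 2^n\sum_{\mathcal{M}}\prod_{\{j,k\}\in\mathcal{M}}|m_{jk}|,
\end{align*}
where the sum is over all perfect matchings $\mathcal{M}$ of $\{1,\ldots,2n\}$.

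Next, each such $\mathcal{M}$ is encoded in exactly $2^n$ ways as a pair $(A,\sigma)$, with $A\subset\{1,\ldots,2n\}$ of size $n$ and $\sigma:A\to A^c$ a bijection (pick one endpoint of each pair to place in $A$); consequently
\begin{align*}
2^n\sum_{\mathcal{M}}\prod_{\{j,k\}\in\mathcal{M}}|m_{jk}| \;=\; \sum_{\substack{A\subset\{1,\ldots,2n\}\\|A|=n}}\;\sum_{\sigma:A\to A^c}\prod_{j\in A}|m_{j\,\sigma(j)}|.
\end{align*}
For each fixed $A$, let $X_A=\{x_j:j\in A\}$ and $Y_A=\{x_j:j\in A^c\}$; after relabeling indices, the inner sum is precisely the permanent-type quantity in Theorem \ref{thm:determinant} for the configurations $X_A,Y_A$, whose induced matrix entries still satisfy \eqref{eq:2point_pw}. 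Thus
\begin{align*}
\sum_{\sigma:A\to A^c}\prod_{j\in A}|m_{j\,\sigma(j)}| \;\leq\; C_{\mu,d}^n\, e^{-\frac{\mu}{2\sqrt{d}}\,D_s(X_A,Y_A)}.
\end{align*}

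The final ingredient is the monotonicity $D_s(X_A,Y_A)\geq D_s(X)$: every bijection $\sigma:A\to A^c$ induces a perfect matching of $\{1,\ldots,2n\}$ whose total pair-distance equals $\sum_{j\in A}|x_j-x_{\sigma(j)}|$, and the latter is therefore bounded below by $D_s(X)$; minimizing over $\sigma$ gives the claim. Summing the $\binom{2n}{n}\leq 4^n$ contributions from the outer sum over $A$ and absorbing this factor into the geometric constant yields
\begin{align*}
\frac{1}{n!}\sum_{\pi\in S_{2n}}\prod_{j=1}^n|m_{\pi(2j-1)\,\pi(2j)}| \;\leq\; (4C_{\mu,d})^n\, e^{-\frac{\mu}{2\sqrt{d}}D_s(X)},
\end{align*}
which is the desired bound with renamed constant $C_{d,\mu}$. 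The proof is a purely combinatorial reshuffling on top of Theorem \ref{thm:determinant}: all geometric content is already contained in that result, so the only thing that can go wrong is miscounting in the two bijection steps above, and that is the only point I would double-check carefully in a clean write-up.
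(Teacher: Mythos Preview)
Your proof is correct and follows essentially the same route as the paper: decompose the sum over $S_{2n}$ into a sum over $n$-subsets $A\subset\{1,\ldots,2n\}$ and bijections $A\to A^c$, apply Theorem \ref{thm:determinant} to each inner sum, use $D_s(X_A,Y_A)\geq D_s(X)$, and absorb the $\binom{2n}{n}\leq 4^n$ outer sum into the constant. The paper does the same thing, only phrased via the set $\mathcal{A}_{2n}=\{\pi\in S_{2n}:\pi(1)<\pi(3)<\cdots<\pi(2n-1)\}$ and its partition into the sets $\mathcal{A}_B$, which is in bijection with your $(A,\sigma)$ pairs.

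One small point to clean up: your two displayed \emph{equalities} route through unordered matchings, where the symbol $\prod_{\{j,k\}\in\mathcal M}|m_{jk}|$ is not well-defined unless $|m_{jk}|=|m_{kj}|$, which the theorem does not assume. The composite identity $\frac{1}{n!}\sum_{\pi\in S_{2n}}\prod_j|m_{\pi(2j-1)\,\pi(2j)}|=\sum_{A}\sum_{\sigma}\prod_{j\in A}|m_{j\,\sigma(j)}|$ is nevertheless true (both sides are in bijection with $\sum_{\pi\in\mathcal A_{2n}}\prod_j|m_{\pi(2j-1)\,\pi(2j)}|$), so either skip the matching intermediate step or first bound $|m_{jk}|\leq Ce^{-\mu|x_j-x_k|}$ so that the intermediate object becomes symmetric.
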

\begin{remark}
This theorem also applies to bound {\it Pfaffians} of matrices, see definition in \eqref{def:Pf}, hence also applies to systems that obey the fermionic Wick rule. Besides the applications to the Ising model discussed in Section 2, this theorem in particular formulates a higher dimensional version of Theorem 1.3 of \cite{SW} by Sims and Warzel, and moreover it replaces the $D_m$ distance in \cite{SW} with the stronger $D_s$ distance defined in \eqref{def:DsX}. 
A disordered version of this theorem will be addressed in a follow-up work \cite{HYPf}.
\end{remark}

Next we comment on the proofs of Theorems \ref{thm:main2},  \ref{thm:determinant} and \ref{thm:Pfaffian}.
In \cite{SW}, using crucially the ordering of points in $\Z^1$, the authors were able to rearrange the matrix $M$ into a bordered matrix form, and prove a technical core theorem that concerns the smallness of determinants (with respect to the $D_m$ distance) of such bordered matrices. 
In \cite{ABP}, for arbitrary dimension, the smallness of determinants (with respect to the Hausdorff distance $D_H$) is obtained from the smallness of an entire row or column, which is not true in general when replacing $D_H$ with the stronger $D_m$ or $D_s$.

The major difficulty in proving Theorem \ref{thm:main2} is the lack of effective ordering of points in higher dimensions.
We overcome this difficulty by developing a new approach in Lemma \ref{lem:cr1}, which proves the separation of a constructed cluster of $X$ points from the $Y$ points outside.
Another difficulty is to develop a different technical lemma concerning the smallness of determinant that fits our construction.
Our theorems \ref{thm:determinant} and \ref{thm:Pfaffian} concern the symmetrized distance $D_s$, and are built on a completely new combinatorics argument that essentially counts the number of paths that contribute distance between $2^k D_s$ and $2^{k+1}D_s$ for any $k\geq 0$.

The rest of the paper is organized as follows.
In Sections 2 and 3 we discuss the applications of our main results to the Ising model and the multi-point dynamical localization.
We then prove Theorem \ref{thm:main2} in Section 4, Theorem \ref{thm:determinant} in Section 5, and Theorem \ref{thm:Pfaffian} in Section 6. Sections 7 and 8 are devoted to the proofs of Theorem \ref{thm:Ising_d>4} and Theorems \ref{thm:ULE_loc}, \ref{thm:loc}.

In the following, a constant is allowed to change its value from line to line. 
We will only keep track of its dependence on the parameters.

\section{Application: multi-point correlation bound for the Ising model}\label{sec:apply_Ising}
One of the applications of our main theorems concerns the equilibrium states of the Ising model with ferromagnetic pair interactions.
The ferromagnetic Ising model on a graph $\Lambda$ is defined as
\begin{align}\label{def:Ising}
H=-\frac{1}{2}\sum_{\{x,y\}\subset \Lambda} J_{x,y} \sigma_x \sigma_y+h\sum_{x\in \Lambda}\sigma_x,
\end{align}
where $\{\sigma_x=\pm 1\}_{x\in\Lambda}$ are the spin variables, $h$ is the magnetic field and $J_{x,y}\geq 0$.

States of the system are given by probability measures on the space of configurations at the {\it inverse temperature} $\beta=1/T\geq 0$,
being described by the Gibbs states. 
For a finite graph $\Lambda$ the corresponding expectation value of functions of the spins are :
\begin{align*}
\langle f\rangle_{\Lambda,\beta}=\mathrm{tr}_{\Lambda} (f(\sigma) e^{-\beta H(\sigma)})/Z_{\Lambda}(\beta),
\end{align*}
where $\mathrm{tr}_{\Lambda}$ denotes $\prod_{x\in \Lambda} \frac{1}{2}\sum_{\sigma_x=\pm 1}$ and 
$Z_{\Lambda}(\beta)=\mathrm{tr}_{\Lambda} e^{-\beta H(\sigma)}$ is the partition function.
For infinite systems, the Gibbs states are defined by the limits of finite approximations.
For $\Lambda=\Z^d$, we shall write $\la f\ra_{\Z^d,\beta}$ as $\la f\ra_{\beta}$ for simplicity.

The graphs that we consider in this paper are $\Z^d$, $d\geq 2$, and their finite apporximants.
We focus on the nearest neighbor Ising model, meaning $J_{x,y}=\delta_1(|x-y|)$.
For $d\geq 2$, this model exhibits a line of first-order phase transitions for $h=0$ and $\beta\in (\beta_c,\infty)$, where $\beta_c>0$ is referred as the inverse of the critical temperature. Since the phase transition occurs at zero magnetic field, we restrict the discussion to $h=0$ and will omit h from the notation.

The correlation functions are defined through the following expectation values:
for $A\subset\Lambda$, let
\begin{align}\label{def:Ising_cf}
\langle \sigma_A\rangle_{\Lambda,\beta}=\langle \prod_{x\in A} \sigma_x\rangle_{\Lambda,\beta}.
\end{align}
It is known that the correlation functions $\la \sigma_A\ra_{\Lambda,\beta}$ are increasing as $\Lambda \uparrow \Z^d$.
The critical $\beta_c$ is such that 
\begin{align}\label{eq:betac}
\beta_c=\inf \{\beta: \la\sigma_x\ra_{\beta}>0\}.
\end{align}
The truncated two-point correlation function is defined by
\begin{align}\label{def:Ising_2p}
\la \sigma_x; \sigma_y\ra_{\Lambda,\beta}=\la \sigma_x\sigma_y\ra_{\Lambda,\beta}-\la \sigma_x\ra_{\Lambda,\beta} \la \sigma_y\ra_{\Lambda,\beta}.
\end{align}
Note that in the regime $0\leq \beta<\beta_c$, the truncated two-point function coincides with the (standard) two-point function defined in \eqref{def:Ising_cf}.

Exponential decay for the truncated two-point correlation function
\begin{align}\label{eq:exp_2p_Ising}
\la \sigma_x; \sigma_y\ra_{\beta}\leq e^{-\mu |x-y|}, \text{ for } x,y\in \Z^d
\end{align}
for the Ising model on $\Z^d$ has been proved by Aizenman-Barsky-Fernández in \cite{ABF} for $\beta<\beta_c$, and Duminil-Copin, Goswami and Raoufi in \cite{DCGR} for $\beta>\beta_c$.
Our goal is to use the exponential decay of the two-point functions as inputs to bound the multi-point functions, which is made possible due to the following remarkable results. 

In 1981, in his seminal work \cite{A81,A82} Aizenman proved that above the critical dimension $d=4$, the scaling limit of the multi-point correlation function follows the Wick rule.
The long standing problem about the critical dimension $d=4$ was recently tacked by Aizenman and Duminil-Copin in 2020 \cite{ADC}.
More precisely, the above mentioned results are:
\begin{theorem}\cite{A82, ADC}
For the nearest neightbor Ising model on $\Z^d$ with $d\geq 4$, if for some $\kappa(\delta)\to \infty$ the scaled correlation function converges (pointwise for $x_1,..,x_{2n}\in \R^d$), 
$$T_{2n}(x_1,...,x_{2n})=\lim_{\delta\to 0} \kappa(\delta)^{2n} \la \prod_{j=1}^{2n}\delta_{[x_j/\delta]}\ra_{\beta_c},$$
then the limiting function satisfies 
\begin{align}\label{eq:Gaussian}
T_{2n}(x_1,...,x_{2n})=\frac{1}{2^n n!}\sum_{\pi\in S_{2n}} \prod_{j=1}^n T_2(x_{\pi(2j-1)}, x_{\pi(2j)}).
\end{align}
\end{theorem}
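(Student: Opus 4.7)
The plan is to prove the theorem via the \textbf{random current representation} of the Ising model and a careful analysis of the Ursell (truncated) correlation functions. The conclusion that $T_{2n}$ obeys Wick's rule \eqref{eq:Gaussian} is equivalent to the statement that, under the given scaling, every \emph{Ursell function} $u_{2n}$ of order $\geq 4$ vanishes, where $u_4 = \langle \sigma_1\sigma_2\sigma_3\sigma_4\rangle - \langle \sigma_1\sigma_2\rangle\langle \sigma_3\sigma_4\rangle - \langle \sigma_1\sigma_3\rangle\langle \sigma_2\sigma_4\rangle - \langle \sigma_1\sigma_4\rangle\langle \sigma_2\sigma_3\rangle$ and higher $u_{2n}$ are defined recursively by subtracting all Wick pairings. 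So I would first reduce the theorem to proving $\kappa(\delta)^{2n} u_{2n}([x_1/\delta], \ldots, [x_{2n}/\delta]) \to 0$ for every $n \geq 2$.

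The central tool is the Griffiths-Hurst-Sherman/Aizenman \textbf{switching lemma}, which rewrites differences of products of Ising correlations as sums over pairs of independent random currents conditioned on certain connectivity events. For the four-point Ursell function it yields the basic bound
\begin{equation*}
|u_4(x_1, x_2, x_3, x_4)| \leq 2 \sum_{y \in \Z^d} \langle \sigma_{x_1}\sigma_y\rangle_{\beta} \, \langle \sigma_{x_2}\sigma_y\rangle_{\beta} \, \langle \sigma_{x_3}\sigma_y\rangle_{\beta} \, \langle \sigma_{x_4}\sigma_y\rangle_{\beta}.
\end{equation*}
Iterating this lemma produces analogous tree-diagram bounds on $|u_{2n}|$ in which at least one factor takes the form of an ``intersection bubble'' $B(\beta) = \sum_y \langle \sigma_0\sigma_y\rangle_{\beta}^2$, reflecting the probability that two independent random currents meet at the summation site.

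For $d > 4$, Aizenman's 1982 argument proceeds by combining reflection positivity with the Fröhlich-Simon-Spencer infrared bound to conclude that $B(\beta_c)$ is finite; a straightforward dimensional rescaling then forces every tree-diagram contribution to vanish in the scaling limit, while the Wick-paired terms survive to produce \eqref{eq:Gaussian}. Once the $u_4$ bound is in place this final step is essentially bookkeeping, and the same strategy controls all higher $u_{2n}$ by induction on $n$.

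The main obstacle, and the genuinely hard step, is the critical dimension $d = 4$, where $B(\beta_c)$ diverges logarithmically and the classical tree-bound fails by exactly this logarithmic factor. The breakthrough of Aizenman-Duminil-Copin is a strictly improved intersection estimate: they prove that two independent critical random currents which meet must in fact intersect on a set whose expected size grows comparably to $B(\beta_c)$ itself, yielding a quadratic gain that precisely cancels the logarithm. Implementing this requires a delicate multi-scale decomposition of each random current into a ``backbone'' walk plus sprinkled loops, together with sharp intersection estimates for two random-walk-like objects at their critical dimension. This is the deep analytic content of \cite{ADC}, and in the plan above it is the step where essentially all of the real work is concentrated.
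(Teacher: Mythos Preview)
This theorem is quoted in the paper as a result of Aizenman \cite{A82} and Aizenman--Duminil-Copin \cite{ADC}; the paper provides no proof of its own, treating it purely as background for the application in Theorem~\ref{thm:Ising_d>4}. So there is no ``paper's proof'' to compare against.

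That said, your outline is an accurate high-level summary of the arguments in the cited references. The reduction to vanishing of the scaled Ursell functions, the random-current/switching-lemma derivation of the tree bound on $u_4$ (which is exactly the Proposition~5.3 of \cite{A82} that the present paper quotes later), the role of the bubble diagram $B(\beta_c)$ and its finiteness for $d>4$ via the infrared bound, and the identification of the $d=4$ obstacle together with the improved intersection estimate of \cite{ADC} are all correctly placed. One small correction: in the classical argument the higher Ursell functions are not handled by ``iterating'' the $u_4$ bound to produce tree diagrams containing a bubble factor; rather, once $u_4$ is shown to vanish in the scaling limit, the Gaussian structure for all $2n$-point functions follows from a separate identity (Proposition~12.1 of \cite{A82}, also quoted in this paper) expressing $\langle\sigma_{x_1}\cdots\sigma_{x_{2n}}\rangle - G_{2n}$ as a sum of terms each carrying a $u_4$ factor. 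Your description conflates these two steps slightly, but the overall plan is sound.
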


Due to fact the the multi-point function can be written in terms of the two point function, we are able to use Theorem \ref{thm:Pfaffian} to obtain the following bound. Note that at $\beta=\beta_c$, two-point functions follows a power-law decay \cite{Simon}, hence Theorem \ref{thm:Pfaffian} does not apply.
\begin{theorem}\label{thm:Ising_d>4}
For the nearest neighbor ferromagnetic Ising model on $\Z^d$ with $d\geq 2$, for $0\leq \beta<\beta_c$, we have
\begin{align*}
\la \sigma_A\ra_{\beta}\leq C_{d,\mu}^n e^{-\frac{\mu}{2\sqrt{d}} D_s(A)},
\end{align*}
where $\mu$ is the constant in the two-point exponential decay \eqref{eq:exp_2p_Ising}, $C_{d,\mu}>0$ is a constant and $A\subset \Z^d$ is an arbitrary subset with $|A|=2n$.
\end{theorem}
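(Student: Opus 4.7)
The plan is to combine the classical Gaussian (Wick-type) upper bound for ferromagnetic Ising correlations with Theorem \ref{thm:Pfaffian}. Since $h=0$ and $0\leq \beta<\beta_c$, spin-flip symmetry forces $\la\sigma_x\ra_\beta=0$; hence all odd-point correlations vanish and the truncated two-point function coincides with the standard one, as already noted in \eqref{def:Ising_2p}. It therefore suffices to establish the bound in the case $|A|=2n$.

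The central input beyond the material in the excerpt is the pairing (Gaussian) upper bound
\[
 \la \sigma_A\ra_\beta \;\leq\; \sum_{\pi\in\mathrm{Pair}(A)} \prod_{\{i,j\}\in \pi} \la \sigma_{x_i}\sigma_{x_j}\ra_\beta \;=\; \frac{1}{2^n\, n!}\sum_{\pi\in S_{2n}} \prod_{j=1}^n \la \sigma_{x_{\pi(2j-1)}}\sigma_{x_{\pi(2j)}}\ra_\beta,
\]
valid for the ferromagnetic Ising measure with $h=0$ at all $\beta\geq 0$. At $2n=4$ this is the classical Lebowitz inequality; the statement for general $n$ is due to Newman and admits a short proof via Aizenman's random current representation. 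I would invoke these results rather than reprove them.

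Feeding the exponential decay \eqref{eq:exp_2p_Ising} into the pairing bound gives
\[
 \la \sigma_A\ra_\beta \;\leq\; \frac{1}{2^n\, n!}\sum_{\pi\in S_{2n}} \prod_{j=1}^n e^{-\mu|x_{\pi(2j-1)}-x_{\pi(2j)}|}.
\]
Applying Theorem \ref{thm:Pfaffian} to the symmetric matrix with entries $m_{jk}=e^{-\mu|x_j-x_k|}$ and $C=1$ bounds the right-hand side by $2^{-n}\, C_{d,\mu}^n\, e^{-\frac{\mu}{2\sqrt{d}}D_s(A)}$; absorbing the factor $2^{-n}$ into $C_{d,\mu}^n$ yields the claimed inequality.

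The only non-routine step in this plan is the pairing upper bound itself; everything else is a direct substitution into an already-proved theorem. Thus the main obstacle reduces to citing (or briefly reproducing, via random currents) the classical Gaussian domination inequality for ferromagnetic Ising correlations, after which the geometric and combinatorial content of Theorem \ref{thm:Pfaffian} does all the remaining work.
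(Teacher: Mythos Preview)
Your proposal is correct, and in fact streamlines the paper's own argument. The paper proceeds via Aizenman's Proposition 12.1 in \cite{A82}, which bounds the \emph{absolute deviation} $|\la\sigma_A\ra_\beta - G_{2n}|$ by a sum of terms $|U_4|\cdot G_{2n-4}$; it then invokes the tree-diagram bound on $|U_4|$ together with Theorem~\ref{thm:Pfaffian} applied separately to $G_{2n}$ and each $G_{2n-4}$, finally recombining via the triangle inequality. You instead use only one half of that same circle of ideas---the Gaussian upper bound $\la\sigma_A\ra_\beta\le G_{2n}$ (Lebowitz for $2n=4$, Newman/random-current in general)---and feed it straight into Theorem~\ref{thm:Pfaffian}. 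This bypasses the $U_4$ estimate and the sum over four-element subsets entirely.

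What each approach buys: yours is shorter and requires only the one-sided inequality, which is the easier direction and holds for every ferromagnetic Ising system without any appeal to the dimension or the tree bound on $U_4$. The paper's route, by controlling $|\la\sigma_A\ra_\beta-G_{2n}|$ rather than just an upper bound, in principle yields two-sided information about how close the correlation is to its Gaussian value; but since the theorem as stated only asks for an upper bound, that extra control is not used, and your path is the more economical one.
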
 
The proof, which relies heavily on the arguments in \cite{A82}, is presented in Section \ref{sec:Ising}.

\

Our second application is the multi-point correlation bound for points along a boundary for planar graphs.
In contrast to the Gaussian field structure of correlation function for points in the bulk of the lattice, the boundary correlation for 2D Ising model has a Pfaffian structure.
This was proved for graphs with a regular transfer matrix by Schultz-Mattis-Lieb \cite{SML}, and for any planar model by Groeneveld-Boel-Kasteleyn \cite{GBK}.
\begin{theorem}\cite{GBK}\label{thm:GBK}
Fix a planar graph $G$, arbitrary nearest-neighbor couplings $J$, and $\beta\geq 0$. Then, for any cyclically ordered 2n-tuple $(x_1,x_2,...,x_{2n})$ of sites located along the boundary of a fixed face of $G$, we have
\begin{align*}
\la \prod_{j=1}^{2n}\sigma_{x_j} \ra_{G,\beta}=\mathrm{Pf}([\la \sigma_{x_j} \sigma_{x_k}\ra_{G,\beta} ]_{1\leq j<k\leq 2n}).
\end{align*}
\end{theorem}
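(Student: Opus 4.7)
The plan is to derive the Pfaffian identity via Kasteleyn's theorem for dimer configurations on planar graphs; the strategy is to expand both sides of the claimed identity as signed sums over certain subgraphs of $G$ and match them term by term, with the cyclic-ordering hypothesis entering precisely at the level of sign bookkeeping.

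First I would apply the high-temperature expansion
$$e^{\beta J_{e}\sigma_x\sigma_y}=\cosh(\beta J_e)\bigl(1+\tanh(\beta J_e)\sigma_x\sigma_y\bigr),\qquad e=\{x,y\},$$
to each edge and integrate out the spins, rewriting $\langle\prod_{j=1}^{2n}\sigma_{x_j}\rangle_{G,\beta}$ as a weighted sum over subgraphs $E'\subset E(G)$ whose vertex-degree parity is odd exactly on $A=\{x_1,\ldots,x_{2n}\}$ and even elsewhere, each carrying weight $\prod_{e\in E'}\tanh(\beta J_e)$. The same expansion represents each two-point function $\langle\sigma_{x_j}\sigma_{x_k}\rangle_{G,\beta}$ as a sum over subgraphs that are odd only at $\{x_j,x_k\}$. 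I would then invoke the Fisher trick to recast each such constrained subgraph sum as a perfect-matching partition function on a planar auxiliary graph $\widetilde{G}$, obtained by replacing every vertex of $G$ by a small local decoration and attaching the marked vertices in $A$ as distinguished external sites.

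By planarity of $\widetilde{G}$, Kasteleyn's theorem supplies a signed adjacency matrix $K$ whose Pfaffian equals the signed dimer partition function. Standard manipulations with Pfaffian minors (via the Jacobi-type identity for skew-symmetric matrices) then identify $\langle\sigma_{x_j}\sigma_{x_k}\rangle_{G,\beta}$ with the entry $(K^{-1})_{x_j,x_k}$, up to a global multiplicative constant independent of the chosen pair, and identify $\langle\prod_{j=1}^{2n}\sigma_{x_j}\rangle_{G,\beta}$ with the Pfaffian of the $2n\times 2n$ submatrix $(K^{-1})_{A,A}$. The theorem then reduces to the tautological identity
$$\mathrm{Pf}\bigl((K^{-1})_{A,A}\bigr)=\mathrm{Pf}\bigl([(K^{-1})_{x_j,x_k}]_{1\leq j<k\leq 2n}\bigr).$$

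The actual content of the theorem, and the main obstacle, lies in showing that the global Kasteleyn signs and common multiplicative factors can indeed be absorbed into the individual two-point functions without leaving a residue. The cyclic-ordering hypothesis is exactly what makes this possible: when $x_1,\ldots,x_{2n}$ sit in cyclic order along the boundary of a single face, every pair partition $\pi$ of the $2n$ points can be realized by a family of non-crossing arcs drawn through that face, and one verifies that the product of Kasteleyn edge signs along such a family matches the $\mathrm{sgn}(\pi)$ appearing in the Pfaffian expansion. This planar sign lemma fails for interior insertions, which is exactly why the Pfaffian structure is a boundary phenomenon; carrying it out rigorously requires a careful planar-isotopy argument that I would import from the standard dimer-model literature rather than reconstruct.
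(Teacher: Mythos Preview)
The paper does not prove this theorem. It is quoted from \cite{GBK} as a known result and used as a black box input to Theorem~\ref{thm:Pfaffian_Z2}; there is no argument in the paper to compare your proposal against.

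That said, your sketch is a viable route to the result, though it is not the one in the original Groeneveld--Boel--Kasteleyn paper. Their argument works directly with the high-temperature polygon expansion and establishes the Pfaffian recursion by a combinatorial switching identity on the contour configurations, without passing through Fisher's dimer mapping or the Kasteleyn matrix. Your approach instead embeds the problem into the dimer/free-fermion framework and reads off the Pfaffian structure from the Jacobi identity for $\mathrm{Pf}(K^{-1}|_{A})$; this buys a cleaner algebraic endpoint at the cost of importing more machinery, and your identification of the cyclic boundary ordering as exactly what aligns the Kasteleyn signs with $\mathrm{sgn}(\pi)$ is the correct diagnosis of where planarity enters. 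One caution: the identification of $\langle\sigma_{x_j}\sigma_{x_k}\rangle$ with an entry of $K^{-1}$ is really an identification with a ratio of Pfaffians, and the ``global multiplicative constant'' you mention must be checked to be a genuine square so that it factors consistently out of the $2n$-point Pfaffian; this is routine but should not be glossed over.
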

Above, a planar graph is a graph embedded in the plane $\R^2$ in such a way that its edges, depicted by bounded simple arcs, intersect only at their endpoints. The faces of the graph are the connected components of the plane minus the edges.
And the {\it Pfaffian} of a skew-symmetric matrix $M$ is defined by
\begin{align}\label{def:Pf}
\mathrm{Pf}(M):=\frac{1}{2^n n!} \sum_{\pi\in S_{2n}} \mathrm{sgn}(\pi)\, \prod_{j=1}^n m_{\pi(2j-1)\, \pi(2j)}.
\end{align}

As application of our Theorem \ref{thm:Pfaffian}, we obtain the following decay bound directly from the two-point decay \eqref{eq:exp_2p_Ising}.
\begin{theorem}\label{thm:Pfaffian_Z2}
Let $\Lambda\subset \Z^2$. Let $X=(x_1,...,x_{2n})$ be a cyclically ordered 2n-tuple located along the boundary of a fixed face of $\Lambda$. For $0\leq \beta<\beta_c$, let $\mu>0$ be the constant in the exponential decay of the two-point correlation function in \eqref{eq:exp_2p_Ising}, we have that for some constant $C_{d,\mu}>0$, the following holds
\begin{align*}
\la \prod_{j=1}^{2n} \sigma_{x_j} \ra_{\Lambda,\beta}\leq C_{d,\mu}^n e^{-\frac{\mu}{2\sqrt{d}}D_s(X)}.
\end{align*}
\end{theorem}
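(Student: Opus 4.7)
The plan is to combine the Pfaffian structure of planar boundary correlations (Theorem \ref{thm:GBK}) with the Pfaffian bound in Theorem \ref{thm:Pfaffian}. The argument is a short chain with no substantive obstacle, so I lay out the three steps in order.

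First, Theorem \ref{thm:GBK} applied to the fixed face of $\Lambda$ containing the cyclically ordered tuple $X$ yields
$$\la \prod_{j=1}^{2n}\sigma_{x_j}\ra_{\Lambda,\beta}=\mathrm{Pf}(M),$$
where $M$ is the skew-symmetric $2n\times 2n$ matrix with entries $m_{jk}=\la \sigma_{x_j}\sigma_{x_k}\ra_{\Lambda,\beta}$ for $j<k$. Taking absolute values in the Pfaffian expansion \eqref{def:Pf} gives
$$\Big|\la \prod_{j=1}^{2n}\sigma_{x_j}\ra_{\Lambda,\beta}\Big| \leq \frac{1}{2^n n!}\sum_{\pi\in S_{2n}}\prod_{j=1}^n |m_{\pi(2j-1)\,\pi(2j)}|.$$

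Second, I would reduce the non-truncated two-point entries to the decaying truncated ones. On any finite $\Lambda$ with $h=0$, spin-flip symmetry forces $\la \sigma_x\ra_{\Lambda,\beta}=0$, so $\la\sigma_x\sigma_y\ra_{\Lambda,\beta}$ coincides with the truncated function $\la \sigma_x;\sigma_y\ra_{\Lambda,\beta}$. By monotonicity of ferromagnetic correlations as $\Lambda\uparrow\Z^d$ (a consequence of Griffiths' second inequality) together with $\la\sigma_x\ra_{\beta}=0$ in the subcritical regime $\beta<\beta_c$, one obtains
$$0\leq \la \sigma_x\sigma_y\ra_{\Lambda,\beta}\leq \la\sigma_x;\sigma_y\ra_{\beta}\leq e^{-\mu|x-y|},$$
using the Aizenman--Barsky--Fern\'andez bound \eqref{eq:exp_2p_Ising}. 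In particular $|m_{jk}|\leq e^{-\mu|x_j-x_k|}$.

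Third, with this pointwise two-point bound in hand, Theorem \ref{thm:Pfaffian} applied to $X\subset \Z^2$ gives
$$\frac{1}{n!}\sum_{\pi\in S_{2n}}\prod_{j=1}^n |m_{\pi(2j-1)\,\pi(2j)}|\leq C_{d,\mu}^n\, e^{-\frac{\mu}{2\sqrt{d}}D_s(X)}$$
(with $d=2$). Dividing by $2^n$ and absorbing the resulting constant into $C_{d,\mu}^n$ yields the claimed estimate. The only non-trivial ingredient is the monotonicity reduction to the truncated two-point function in the subcritical regime; everything else is a direct citation, so no real obstacle is anticipated.
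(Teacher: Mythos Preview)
Your proposal is correct and follows the same route the paper intends: the paper states Theorem \ref{thm:Pfaffian_Z2} as a direct application of Theorem \ref{thm:Pfaffian} together with the two-point decay \eqref{eq:exp_2p_Ising}, without writing out a separate proof. Your three-step chain (Pfaffian structure via Theorem \ref{thm:GBK}, reduction of finite-volume two-point functions to the infinite-volume bound via the monotonicity in $\Lambda$ already recorded in the paper, then Theorem \ref{thm:Pfaffian}) is exactly the intended argument, with the monotonicity step made explicit.
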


\section{Application: multi-point dynamical localization}
Another main motivation of studying the decay of multi-point correlation function is to study the multi-point dynamical localization that was introduced by Bravyi and König in \cite{BK}.
First, we introduce the notion of ergodic Schr\"odinger operators on $\Z^d$.
\begin{definition}\label{def:ergodic}
Let $(\Omega, \nu)$ be a probability measure space and $\E$ be its expectation. 
Let $T_j$, $j=1,...,d$, be a family of commuting ergodic maps on $\Omega$ with respect to $\nu$.
Let $\omega\to H_{\omega}$ be a strongly measurable map from $\Omega$ to self-adjoint operators on $\ell^2(\Z^d)$.
A family of self-adjoint operators $H_{\omega}$ on $\ell^2(\Z^d)$ is called ergodic if for each $x\in \Z^d$ there holds $H_{T^x \omega}=U_x H_{\omega}U_x^{-1}$ with $U_x\phi(y)=\phi(y-x)$, where $T^x:=T_1^{x_1}\cdots T_d^{x_d}$.
\end{definition}

Next, we introduce the definitions of two-point and multi-point dynamical localization.

\begin{definition}\label{def:TPDL}
A one-particle Hamiltonian $H_{\omega}$ on $\ell^2(\Z^d)$ is dubbed as {\it (two-point) exponential dynamical localized in expectation} (DLE) if there are constants $C,\mu>0$ such that for all $x,y\in \Z^d$, the following holds
\begin{align}\label{def:2p_DL}
\E_{\omega}\, \sup_t \left|\la \delta_{x}, e^{-itH_{\omega}} \delta_{y}\ra \right| \leq C e^{-\mu |x-y|}.
\end{align} 
\end{definition}

DLE has been proved for a large class of operators, including the Anderson model with i.i.d. random potentials with absolutly continuous distributions  in $\Z^1$ \cite{KS,DKS} and in $\Z^d$ for $d\geq 2$ \cite{A,AM,ASF}, the unitary Anderson model \cite{HJS}, the almost Mathieu operator on $\Z^1$ \cite{JKru}, random block operators \cite{ESS,CS}, an almost Mathieu type operator on $\Z^d$ \cite{GYZ}, and random word models \cite{Nishant}.
If one replaces the exponential decay on the right-hand side of \eqref{def:2p_DL} with a power-law decay $|x-y|^{-\mu}$ with $\mu>0$, then the (power-law) DLE was proved for a larger class of operators including one dimensional Anderson model with singular supports \cite{CKM,DS} and a one-dimensional long range almost Mathieu type operator \cite{BJ}.

If one takes 
$$m_{jk}^{\omega}(t)=\la \delta_{x_j}, e^{-itH_{\omega}}\delta_{y_k}\ra,$$
then clearly DLE in \eqref{def:2p_DL} verifies exactly the assumption \eqref{eq:2point_assume_main2} in Theorem \ref{thm:main2}.
Thus, as our first application, Theorem \ref{thm:main2} yields a multi-point dynamical localization type result in the following sense for all the (two-point) dynamical localized systems, with $D_s$ distance replaced with the $D_m$ distance.

\begin{definition}\label{def:MPDL}
A one-particle Hamiltonian $H_{\omega}$ on $\ell^2(\{1,...,N\})$ is dubbed as {\it multi-point dynamical localized (MPDL) in expectation}, see \cite{BK}, if there are constants $C,\mu>0$ such that for all $n\leq N$ sufficiently large, the following holds
\begin{align}\label{def:mp_DL}
\sup_{t\in \R} \E_{\omega}\, \left|\det(\la \delta_{x_j}, e^{-itH_{\omega}} \delta_{y_k}\ra)_{1\leq j,k\leq n}\right| \leq C^n e^{-\mu N},
\end{align} 
for all configurations $X,Y$ with $D_s(X,Y)\geq N/8$.
\end{definition}
In \cite{BK}, MPDL in expectation (MPDLE) was {\it used as an assumption} to study the growth of storage time of quantum memory.
It was conjectured \cite{BK} that MPDLE holds in the regime of strong disorder, but there has been no rigorous proof for any model so far, see also \cite{CS}.
It was also proposed as an open question in \cite{SW} that whether MPDLE always follows from DLE.

Another achievement of this paper is to prove the existence of MPDLE models as an application of our Theorem \ref{thm:determinant}.  
We prove MPDLE (actually a stronger uniform localization result) for systems with uniformly localized eigenfunctions (see definition below) in Theorem \ref{thm:ULE_loc}.
Hence we provide {\it the first examples} of such feature, and in particular verifies the assumptions made in \cite{BK} for these models. 

Regarding the question whether MPDLE always follows from DLE: due to the nature of the sum distance involved in $D_s$ (rather than the maximal distance in $D_m$), MPDLE seems to require the control of {\it the expectation of each term} in the determinant expansion, which will unlikely follow from the general DLE.
But we are still able to prove an almost sure version of MPDL in Theorem \ref{thm:loc}, rather than MPDLE, for general DLE systems.

At last, we comment that, assuming {\it the expectation of each term} in the determinant expansion is controlled as in \eqref{eq:stronger_assume} below, which is {\it a stronger assumption} than DLE,
\begin{align}\label{eq:stronger_assume}
\E \left[\prod_{j=1}^n \left|\la \delta_{x_j}, e^{-itH_{\omega}} \delta_{y_{\pi(j)}}\ra\right|\right]\leq C^n e^{-\mu\sum_{j=1}^n |x_j-y_{\pi(j)}|},
\end{align}
MPDLE was proved (conditionally) in the $d=1$ case in \cite{BK}, see Lemma 3 therein. We can strengthen this {\it conditional result} to arbitrary dimension (the proof is the same as that for Theorem \ref{thm:determinant}, so we omit it in this paper).
It is an interesting question if \eqref{eq:stronger_assume} is true in general.

Next, we explain the applications of Theorem \ref{thm:determinant} mentioned above in details.
We introduce the definition of operators with uniformly localized eigenfunctions.
\begin{definition}\label{def:ULE}
We say a self-adjoint operator $H$ on $\ell^2(\Z^d)$ has uniformly localized eigenfunctions (ULE) if $H$ has a complete set $\{\phi_k\}_{k=1}^{\infty}$ of orthonormal eigenfunctions, and there exists $C, \mu>0$ and $m_k\in \Z^d$ such that 
\begin{align*}
|\phi_k(m)|\leq C e^{-\mu |m-m_k|}, 
\end{align*}
holds for any $k\in \N$ and $m\in \Z^d$.
\end{definition}
For ULE systems, we have
\begin{theorem}\label{thm:ULE_loc}
If $H_{\omega}$ is a family of ergodic operators with ULE for a positive measure set of $\omega$, 
Namely, for positive measure set of $\omega$, $H_{\omega}$ has a complete set $\{\phi_k^{\omega}\}_{k=1}^{\infty}$ of orthonormal eigenfunctions, and there exist $C_{\omega}, \mu_{\omega}>0$ and $m_k^{\omega}\in \Z^d$ such that
\begin{align*}
|\phi_k^{\omega}(m)|\leq C_{\omega} e^{-\mu_{\omega} |m-m_k^{\omega}|}.
\end{align*}
Then multi-point dynamical localization holds uniformly for $H_{\omega}$ for any $\omega\in \mathrm{supp}(\nu)$. Namely, for some constants $C,\mu>0$,
\begin{align*}
\sup_{t\in \R} \left|\det (\la \delta_{x_j}, e^{-itH_{\omega}} \delta_{y_k}\ra)_{1\leq j,k\leq n}\right|\leq C^n e^{-\mu D_s(X,Y)},
\end{align*}
holds for any $\omega\in \mathrm{supp}(\nu)$ and arbitrary $n$-point configurations $X,Y\subset \Z^d$.
\end{theorem}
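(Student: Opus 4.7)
The strategy is to reduce Theorem \ref{thm:ULE_loc} to Theorem \ref{thm:determinant} by establishing a pointwise exponential decay on each entry of $M^\omega(t):=(\la \delta_{x_j}, e^{-itH_\omega}\delta_{y_k}\ra)_{1\leq j,k\leq n}$, starting from the spectral representation $M^\omega_{jk}(t) = \sum_\ell e^{-it\lambda_\ell^\omega}\phi_\ell^\omega(x_j)\overline{\phi_\ell^\omega(y_k)}$. First I would upgrade ULE to uniform constants by ergodicity: for each rational pair $(C,\mu)>0$, the set $A_{C,\mu}:=\{\omega: H_\omega\text{ has ULE with constants }(C,\mu)\}$ is invariant under every $T_j$, because the intertwining $H_{T^x\omega}=U_xH_\omega U_x^{-1}$ yields $\phi_\ell^{T^x\omega}=U_x\phi_\ell^\omega$ with centers $m_\ell^{T^x\omega}=m_\ell^\omega+x$ and the same constants. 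Since $\bigcup_{(C,\mu)}A_{C,\mu}$ has positive measure by hypothesis, ergodicity forces $\nu(A_{C_0,\mu_0})=1$ for some fixed $(C_0,\mu_0)$, and it is enough to prove the determinant bound on this full-measure invariant set.

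The technical core is a polynomial density bound on the localization centers,
\[ N(z,R):=\#\{\ell:\ m_\ell^\omega\in B(z,R)\}\leq C_d R^d \quad (R\geq 1), \]
which is where orthonormality enters nontrivially. By Parseval, $\sum_\ell |\phi_\ell^\omega(m)|^2=1$ for every $m\in\Z^d$. For any $\ell$ with $m_\ell^\omega\in B(z,R)$, ULE implies
\[ \sum_{m\notin B(z,2R)}|\phi_\ell^\omega(m)|^2\leq C_0^2\sum_{|m-m_\ell^\omega|\geq R}e^{-2\mu_0|m-m_\ell^\omega|}\leq \tfrac{1}{2} \]
once $R\geq R_0(C_0,\mu_0,d)$, hence $\sum_{m\in B(z,2R)}|\phi_\ell^\omega(m)|^2\geq \tfrac12$. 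Summing over such $\ell$ and interchanging with $\sum_{m\in B(z,2R)}$ yields $N(z,R)/2\leq |B(z,2R)|\leq C_d R^d$, with small $R$ absorbed into the constant. Combining with the triangle inequality $|x_j-m_\ell^\omega|+|y_k-m_\ell^\omega|\geq |x_j-y_k|$ and ULE gives
\[ |M^\omega_{jk}(t)|\leq C_0^2\,e^{-\mu_0|x_j-y_k|/2}\sum_\ell e^{-\mu_0(|x_j-m_\ell^\omega|+|y_k-m_\ell^\omega|)/2}, \]
and the remaining $\ell$-sum is bounded uniformly by $\sum_\ell e^{-\mu_0|x_j-m_\ell^\omega|/2}\leq C(d,\mu_0)$ via a shell decomposition fueled by the density bound.

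Consequently $|M^\omega_{jk}(t)|\leq Ce^{-\mu_0|x_j-y_k|/2}$ uniformly in $t$, and invoking Theorem \ref{thm:determinant} with exponential rate $\mu_0/2$ produces
\[ \sup_t|\det M^\omega(t)|\leq \sum_{\pi\in S_n}\prod_j|M^\omega_{j,\pi(j)}(t)|\leq C^n e^{-\frac{\mu_0}{4\sqrt d}D_s(X,Y)} \]
for every $\omega\in A_{C_0,\mu_0}$, which is the stated bound. The main obstacle is the density bound, since every subsequent exponential estimate collapses without polynomial control on $N(z,R)$; once this is in hand, the reduction to Theorem \ref{thm:determinant} is essentially automatic, and the same proof in fact works pointwise for each $\omega$ at which ULE holds with uniform constants.
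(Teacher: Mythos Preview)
Your overall strategy coincides with the paper's: upgrade ULE to $\omega$-uniform constants, deduce a pointwise two-point bound $|\la\delta_x,e^{-itH_\omega}\delta_y\ra|\le Ce^{-\mu'|x-y|}$, and then invoke Theorem~\ref{thm:determinant}. The paper does the first two steps by citation (Theorem~\ref{thm:uniform} from \cite{uniform} for uniform constants, and the computation in \cite{DJLS} for the two-point bound), whereas you supply self-contained arguments. Your density estimate $N(z,R)\le C_dR^d$ via Parseval and the ensuing shell decomposition are correct and are precisely what underlies the ``direct computation as in \cite{DJLS}'' that the paper alludes to; this part of your write-up is more informative than the paper's.

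There is, however, a genuine shortfall in your first step. Your ergodicity argument produces a full-measure invariant set $A_{C_0,\mu_0}$ on which ULE holds with fixed constants, and you conclude the determinant bound only for $\omega\in A_{C_0,\mu_0}$. The theorem, though, asserts the bound for \emph{every} $\omega\in\mathrm{supp}(\nu)$, and passing from a set of full measure to all of $\mathrm{supp}(\nu)$ is exactly the nontrivial content of \cite{uniform} (Theorem~\ref{thm:uniform} in the paper). That result uses more than ergodicity: one needs that orbits are dense in $\mathrm{supp}(\nu)$ together with an approximation/semicontinuity argument to propagate ULE with the same constants to limit points. Your sentence ``it is enough to prove the determinant bound on this full-measure invariant set'' is therefore not justified as stated. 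A secondary point you glide over is the measurability of $A_{C,\mu}$; since the centers $m_\ell^\omega$ are not canonically defined, showing that the ULE condition with fixed constants cuts out a measurable set requires a word of care (e.g.\ rewriting it via $\sup_\ell\inf_{m_0\in\Z^d}\sup_m e^{\mu|m-m_0|}|\phi_\ell^\omega(m)|\le C$ for a measurable enumeration of eigenfunctions).
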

\begin{remark}
Although it was pointed out in \cite{J,DJLS} that ULE does not hold for the Anderson model or the almost Mathieu operator, as it leads to a violation of generic continuous spectrum. 
ULE has been proved for some limit-periodic operators \cite{DG,DG2} and some quasi-periodic operators \cite{GFP,BLS,JKmon,Kmon,Chu}.
\end{remark}

For general dynamical localized disordered systems, we have:
\begin{theorem}\label{thm:loc}
Let $H_{\omega}$ be a family of ergodic operators. 
Let $I\subset \C$ and consider a family of operators $\rho(s,t,H_{\omega})$ on $\ell^2(\Z^d)$ for $s,t\in I$ that exhibit localization in the sense that for some constants $C,\mu\in (0,\infty)$ and for all $x,y\in \Z^d$, we have
\begin{align}\label{eq:2point_E}
\E_{\omega}\, \left[ \sup_{s,t\in I} |\la \delta_x, \rho(s,t, H_{\omega})\delta_y\ra | \right]\leq Ce^{-\mu |x-y|}.
\end{align}
Then for a.e. $\omega$, and any $n\in \N$, and any pair of configurations $X=\{x_1,...,x_n\}\subset \Z^d$, $Y=\{y_1,...,y_n\}\subset \Z^{d}$,
we have that
\begin{align}\label{eq:loc_as}
\sup_{s,t\in I^n} |\det(\la \delta_{x_j}, \rho(s_j, t_k, H_{\omega}) \delta_{y_k}\ra)| \leq C_{\omega,d,\mu}^n \left(\prod_{j=1}^n (1+|x_j|)^{d+1}\right)\, e^{-\frac{\mu}{4\sqrt{d}} D_s(X,Y)},
\end{align}
where without loss of generality we have assumed $\prod_{j=1}^n (1+|x_j|)\leq \prod_{j=1}^n (1+|y_j|)$.
\end{theorem}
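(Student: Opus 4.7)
The plan is to first upgrade the in-expectation decay \eqref{eq:2point_E} into an almost sure pointwise two-point bound via a Borel--Cantelli argument, and then feed that pointwise bound into the Leibniz expansion of the determinant so that Theorem \ref{thm:determinant} can be applied.

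For each pair $(x,y)\in\Z^d\times\Z^d$, I would define
\[
F_{x,y}(\omega):=\sup_{s,t\in I}\bigl|\la \delta_x,\, \rho(s,t,H_\omega)\,\delta_y\ra\bigr|.
\]
Assumption \eqref{eq:2point_E} says $\E F_{x,y}\leq Ce^{-\mu|x-y|}$, so Markov's inequality with threshold $(1+|x|)^{d+1}e^{-(\mu/2)|x-y|}$ gives
\[
\nu\!\left(F_{x,y}>(1+|x|)^{d+1}e^{-(\mu/2)|x-y|}\right)\leq \frac{C\,e^{-(\mu/2)|x-y|}}{(1+|x|)^{d+1}}.
\]
For each fixed $x$, summing the right-hand side over $y\in\Z^d$ yields a bounded geometric series uniform in $x$, and the remaining sum over $x\in\Z^d$ is finite thanks to the $(1+|x|)^{-(d+1)}$ factor. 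The Borel--Cantelli lemma then produces, on a set of full $\nu$-measure, a finite random constant $C_\omega>0$ such that
\[
F_{x,y}(\omega)\leq C_\omega\,(1+|x|)^{d+1}e^{-(\mu/2)|x-y|}\qquad\text{for every } x,y\in\Z^d.
\]

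Now I would fix such a typical $\omega$, configurations $X,Y$ ordered so that $\prod_j(1+|x_j|)\leq\prod_j(1+|y_j|)$, and times $s,t\in I^n$. Writing $m_{jk}=\la\delta_{x_j},\rho(s_j,t_k,H_\omega)\delta_{y_k}\ra$, the Leibniz expansion combined with the pointwise bound above gives
\[
\sup_{s,t\in I^n}\bigl|\det(m_{jk})\bigr|\leq \sum_{\pi\in S_n}\prod_{j=1}^n F_{x_j,y_{\pi(j)}}(\omega)\leq C_\omega^n\Big(\prod_{j=1}^n(1+|x_j|)^{d+1}\Big)\sum_{\pi\in S_n}\prod_{j=1}^n e^{-(\mu/2)|x_j-y_{\pi(j)}|}.
\]
Applying Theorem \ref{thm:determinant} with decay rate $\mu/2$ in place of $\mu$ bounds the permutation sum by $C_{d,\mu}^n\,e^{-\frac{\mu}{4\sqrt{d}}D_s(X,Y)}$, producing exactly the estimate \eqref{eq:loc_as}.

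I expect the Borel--Cantelli step to be the main point: one must extract a \emph{single} random constant $C_\omega$ that simultaneously controls all pairs $(x,y)\in\Z^d\times\Z^d$, and this is precisely what forces the polynomial prefactor $\prod_{j=1}^n(1+|x_j|)^{d+1}$ to appear in \eqref{eq:loc_as}, since summability of the Markov tails over the full lattice $\Z^d\times\Z^d$ demands such a weight on one of the two variables. The WLOG ordering $\prod_j(1+|x_j|)\leq\prod_j(1+|y_j|)$ then allows one to place the polynomial weight on the smaller of the two products, a convention justified by the symmetry $\det M=\det M^{T}$ which permits interchanging the roles of $X$ and $Y$. No use of ergodicity is needed beyond the hypothesis; the role of ergodicity in \eqref{eq:2point_E} is implicit in the uniform constant $C$, which suffices for the above argument.
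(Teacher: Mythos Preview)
Your proposal is correct and follows the same two-step strategy as the paper: convert the in-expectation bound into an a.e.\ pointwise two-point bound with polynomial prefactor $(1+|x|)^{d+1}$, then feed it into Theorem~\ref{thm:determinant} with rate $\mu/2$. The only cosmetic difference is in the conversion step: you use Markov plus Borel--Cantelli, whereas the paper (following \cite{DJLS}) shows directly that the weighted sum $Q(\omega)=\sum_{x,y}(1+|x|)^{-(d+1)}e^{\frac{\mu}{2}|x-y|}F_{x,y}(\omega)$ has finite expectation and is therefore a.e.\ finite, which immediately yields the same pointwise bound.
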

The factor involving $|x_j|$'s emerges when we convert the averaged assumption \eqref{eq:2point_E} into a deterministic assumption of the form \eqref{eq:2point_pw}.
Assuming $|x_j|\leq N$ for all $j$, we have
$$\sup_{s,t\in I^n} |\det(\la \delta_{x_j}, \rho(s_j, t_k, H_{\omega}) \delta_{y_k}\ra)| \lesssim_{\omega,d,\mu} N^{dn}e^{-\frac{\mu}{4} D_s(X,Y)},$$
which is effective when $D_s\gtrsim_{\omega,d,\mu} n\log N$.

When $\rho(s,t,H_{\omega})=e^{-itH_{\omega}}$, \eqref{eq:2point_E} is the same as DLE.

\section{Proof of Theorem \ref{thm:main2}}
Throughout this section we shall denote $D_m(X,Y)$ by $D$ for simplicity.
The idea to overcome the lack of effective ordering in higher dimensions is to prove that there is a cluster of points that is gapped from the points outside.
We prove this by utilizing a notation of minimal permutation (see Definition \ref{def:min_pi} below) and a novel construction that violates such minimality if the cluster is not gapped from its complement.

For a permutation $\pi$, let $D_{\pi}:=\max_{j=1}^n |x_j-y_{\pi(j)}|$.
For a given minimal $\pi$ in the sense that $D_{\pi}=D$, let $\mathcal{N}(\pi)$ be the number of $j$'s such that $|x_j-y_{\pi(j)}|=D$. Namely $\mathcal{N}(\pi)$ counts the number of times when the maximal value $D$ is attained in a permutation $\pi$.

\begin{definition}\label{def:min_pi}
We say $\pi_0$ is a minimal permutation if $D_{\pi_0}=D_m(X,Y)$ and $\pi_0$ also minimizes $\mathcal{N}(\pi)$. 
Namely, for any arbitrary $\pi$ such that $D_{\pi}=D_m(X,Y)$, we always have $\mathcal{N}(\pi_0)\leq \mathcal{N}(\pi)$.
\end{definition}
For such a $\pi_0$, let $j_0\in \{1,...,N\}$ be such that 
\begin{align}\label{def:j0}
|x_{j_0}-y_{\pi_0(j_0)}|=D.
\end{align} 
Note that there may be multiple choices for $\pi_0$ (and $j_0$), in which case we simply choose one of them.

We define a cluster $\mathcal{C}$ of points as follows:
\begin{definition}\label{def:cluster}
We say $(x_z,y_{\pi_0(z)})\in \mathcal{C}$ if $|x_{j_0}-y_{\pi_0(z)}|<D$ or there exists a chain of distinct points $\{(x_{a_k},y_{\pi_0(a_k)})\}_{k=1}^m$ (with $m\geq 1$) such that $j_0, z\notin \{a_1,...,a_k\}$ and $|x_{j_0}-y_{\pi_0(a_1)}|<D$, $|x_{a_m}-y_{\pi_0(z)}|<D$ and $|x_{a_k}-y_{\pi(a_{k+1})}|<D$ for any $1\leq k\leq m-1$.
\end{definition}

We have the following lemmas.
\begin{lemma}\label{lem:cr1}
We have $(x_{j_0}, y_{\pi_0(j_0)})\notin \mathcal{C}$. Also for any $(x_w, y_{\pi_0(w)})\notin \mathcal{C}$ 
and any $(x_z, y_{\pi_0(z)})\in \mathcal{C}$, we have
\begin{align}\label{eq:xL-yR}
|x_z-y_{\pi_0(w)}|\geq D, \text{ and } |x_{j_0}-y_{\pi_0(w)}|\geq D.
\end{align}
\end{lemma}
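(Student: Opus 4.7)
The plan is to prove both statements by contradiction, exploiting the two-fold minimality built into $\pi_0$: it not only achieves $D_m(X,Y) = D$ but also minimizes $\mathcal{N}(\pi)$ among such minimizers. For the first claim, I would suppose $(x_{j_0}, y_{\pi_0(j_0)}) \in \mathcal{C}$. Since $|x_{j_0} - y_{\pi_0(j_0)}| = D$, the direct clause of Definition \ref{def:cluster} is ruled out, so there must exist a chain of distinct indices $a_1, \ldots, a_m$ ($m \geq 1$, none equal to $j_0$) satisfying
$$|x_{j_0} - y_{\pi_0(a_1)}| < D, \quad |x_{a_k} - y_{\pi_0(a_{k+1})}| < D \text{ for } 1 \leq k < m, \quad |x_{a_m} - y_{\pi_0(j_0)}| < D.$$
I would then define $\pi_1$ by cyclically rotating the $y$-labels along this loop: set $\pi_1(j_0) := \pi_0(a_1)$, $\pi_1(a_k) := \pi_0(a_{k+1})$ for $1 \leq k < m$, $\pi_1(a_m) := \pi_0(j_0)$, and $\pi_1 := \pi_0$ elsewhere. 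This is a bona fide permutation, every new distance along the cycle is strictly less than $D$, and all other distances coincide with those for $\pi_0$. Hence $D_{\pi_1} \leq D$ forces $D_{\pi_1} = D$ by minimality of $D$, while $\mathcal{N}(\pi_1) \leq \mathcal{N}(\pi_0) - 1$ because the instance at $j_0$ is destroyed and no new maxima are created, contradicting the minimality of $\pi_0$.

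For the second claim, the inequality $|x_{j_0} - y_{\pi_0(w)}| \geq D$ is immediate: if $w = j_0$ it reads $D \geq D$, while if $w \neq j_0$ then the opposite inequality $|x_{j_0} - y_{\pi_0(w)}| < D$ would place $w$ in $\mathcal{C}$ by the direct clause of Definition \ref{def:cluster}, contradicting $w \notin \mathcal{C}$. For the remaining inequality $|x_z - y_{\pi_0(w)}| \geq D$, I suppose it fails for some $z \in \mathcal{C}$ and $w \notin \mathcal{C}$. Using $z \in \mathcal{C}$, I take either the singleton chain $\{z\}$ (when $|x_{j_0} - y_{\pi_0(z)}| < D$) or an existing chain $b_1, \ldots, b_l$ from $j_0$ to $z$, and then append $z$ to form a chain from $j_0$ to $w$; together with $|x_z - y_{\pi_0(w)}| < D$ this witnesses $w \in \mathcal{C}$. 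Distinctness of the new chain's indices follows because every element is in $\mathcal{C}$ (by induction along the chain), hence distinct from $j_0$ (by the first claim, just established) and from $w$ (by hypothesis).

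The main obstacle is identifying the correct modification of $\pi_0$ in the first part. Once one recognizes that $(x_{j_0}, y_{\pi_0(j_0)}) \in \mathcal{C}$ forces the existence of a loop of short edges based at $j_0$, cyclically redefining $\pi_0$ along this loop is the natural move, but its utility depends crucially on the fact that this operation strictly decreases $\mathcal{N}$ without raising $D_{\pi_0}$. This is precisely why the secondary minimality criterion on $\mathcal{N}$ (not only on $D_\pi$) must be built into Definition \ref{def:min_pi}. The second claim then follows by lifting each potential violation of \eqref{eq:xL-yR} to a new cluster membership via the same chain-extension mechanism.
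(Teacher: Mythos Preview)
Your proof is correct and follows essentially the same approach as the paper: the cyclic rotation $\pi_1$ along the chain $\{j_0,a_1,\dots,a_m\}$ is exactly the construction the paper uses to contradict the minimality of $\pi_0$, and your treatment of \eqref{eq:xL-yR} by chain-extension simply spells out what the paper dismisses as ``a direct consequence of the definition of $\mathcal{C}$.'' The only additional content you supply is the verification that the extended chain has indices distinct from $j_0$ and $w$ (via the already-proved fact $j_0\notin\mathcal{C}$ and the inductive observation that chain elements lie in $\mathcal{C}$), which is a welcome clarification but not a new idea.
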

\begin{proof}
The claim \eqref{eq:xL-yR} is a direct consequence of the definition of $\mathcal{C}$. We now prove that $(x_{j_0}, y_{\pi_0(j_0)})\notin \mathcal{C}$.
Suppose otherwise, by the definition of $\mathcal{C}$, there exists a chain of points $\{(x_{a_k},y_{\pi_0(a_k)})\}_{k=1}^m$ such that:
\begin{align}\label{eq:chain1}
\begin{cases}
|x_{j_0}-y_{\pi_0(a_1)}|<D\\
|x_{a_k}-y_{\pi_0(a_{k+1})}|<D, \text{ for } 1\leq k\leq m-1\\
|x_{a_m}-y_{\pi_0(j_0)}|<D
\end{cases}
\end{align}
In this case, we can define a permutation $\pi_1$ such that
\begin{align*}
\begin{cases}
\pi_1(j_0)=\pi_0(a_1)\\
\pi_1(a_k)=\pi_0(a_{k+1})\, \text{ for } 1\leq k\leq m-1\\
\pi_1(a_m)=\pi_0(j_0)\\
\pi_1(r)=\pi_0(r), \text{ for } r\notin \{j_0,a_1,...,a_m\}=:U.
\end{cases}
\end{align*}
Clearly by \eqref{eq:chain1}, we have
\begin{align*}
D_{\pi_1}=\max_{r=1}^n |x_r-y_{\pi_1(r)}|
=&\max(\max_{r\in U} |x_r-y_{\pi_1(r)}|, \max_{r\in U^c} |x_r-y_{\pi_1(r)}|)\leq D
\end{align*}
and when $D_{\pi_1}=D$, we must have $\mathcal{N}_{\pi_1}<\mathcal{N}_{\pi_0}$. This violates the minimality of $\pi_0$, hence leads to a contradiction. This proves the claimed result.
\end{proof}

We also need the following, which plays the same role as the technical core Theorem 3.1 in \cite{SW}.
\begin{lemma}\label{lem:la}
Let $M\in \C^{p\times p}$ be a block matrix, satisfying $\|M\|\leq 1$, and is of the following form:
\begin{align}\label{eq:blockM}
M=\left(\begin{matrix}
A\ & B\\
C\ & D
\end{matrix}\right)
\end{align}
where $A\in \C^{\ell\times m}$ with $m<\ell$. Then
\begin{align*}
|\det M|\leq \|B\|.
\end{align*}
\end{lemma}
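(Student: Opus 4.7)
The plan is to reduce $M$ to block upper-triangular form via a QR factorization of its first $m$ columns and then exploit the extra dimension afforded by $m<\ell$ through a judicious choice of unitary extension. Write $M_1=\binom{A}{C}\in\C^{p\times m}$ for the first $m$ columns of $M$ and $M_2=\binom{B}{D}\in\C^{p\times(p-m)}$ for the last $p-m$ columns. If the columns of $M_1$ are linearly dependent then $\det M = 0$ and there is nothing to prove; otherwise I would take a thin QR factorization $M_1 = Q_1 R_1$, where $Q_1\in\C^{p\times m}$ has orthonormal columns and $R_1\in\C^{m\times m}$ is upper triangular. The key geometric observation is that the column span $W:=\mathrm{image}(Q_1)$ lies inside $\mathrm{image}(A)\oplus\C^{p-\ell}\subset\C^\ell\oplus\C^{p-\ell}=\C^p$, so its orthogonal complement $W^\perp$ contains $V\oplus\{0\}^{p-\ell}$, where $V:=\mathrm{image}(A)^\perp\subset\C^\ell$ has dimension at least $\ell-m\geq 1$ precisely because of the hypothesis $m<\ell$.

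I would then pick orthonormal $w_1,\dots,w_{\ell-m}\in V$, set the first $\ell-m$ columns of $Q_2\in\C^{p\times(p-m)}$ to be the lifts $\binom{w_i}{0}\in\C^p$, and complete $Q_2$ to an orthonormal basis of $W^\perp$. Then $Q:=[Q_1\,|\,Q_2]$ is unitary and a direct computation gives the block upper-triangular identity
$$Q^* M = \begin{pmatrix} R_1 & Q_1^*M_2 \\ 0 & Q_2^* M_2 \end{pmatrix},$$
from which $|\det M| = |\det R_1|\cdot|\det(Q_2^* M_2)|$. The first factor satisfies $|\det R_1|^2 = \det(M_1^*M_1)\leq 1$ since $\|M_1\|\leq \|M\|\leq 1$.

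For the second factor, by construction the $i$-th row of $Q_2^* M_2$ for $1\leq i\leq \ell-m$ equals $(w_i^*,0)\binom{B}{D} = w_i^* B$, whose $\ell^2$ norm is at most $\|B\|$. The remaining $p-\ell$ rows are of the form $q^* M_2$ for some unit vector $q\in \C^p$, hence have norm at most $\|M_2\|\leq \|M\|\leq 1$. Hadamard's inequality applied to the rows of $Q_2^*M_2$ yields $|\det(Q_2^* M_2)|\leq \|B\|^{\ell-m}$, and combining with $\|B\|\leq 1$ and $\ell-m\geq 1$ gives the desired bound $|\det M|\leq \|B\|^{\ell-m}\leq \|B\|$.

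The main obstacle is the geometric step of arranging the first $\ell-m$ columns of $Q_2$ to have the product form $\binom{w_i}{0}$: this is the unique place where $m<\ell$ is used, and without it the dimension count collapses and one only gets the trivial bound $|\det M|\leq 1$. A secondary subtlety worth flagging is that Hadamard must be applied to the \emph{rows} (not the columns) of $Q_2^* M_2$, since only the rows inherit the clean structural bound involving $\|B\|$; a column-wise bound would again only deliver the trivial estimate.
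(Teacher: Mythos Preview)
Your proof is correct and rests on the same core idea as the paper's: since $A$ is strictly tall, its left null space is nontrivial, so a unitary row operation can produce a row whose only nonzero entries come from $B$, after which Hadamard's inequality together with $\|M\|\leq 1$ finishes the job.

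The execution differs. The paper takes a single unit vector $\psi_1\in\C^\ell$ with $\psi_1^* A=0$, extends it to an orthonormal basis of $\C^\ell$, applies the block-diagonal unitary $\mathrm{diag}(U^*,I)$, and then uses Hadamard directly on the full $p\times p$ matrix; the bound $\|B\|$ comes from the single row $(\psi_1^*B,\,0,\ldots,0)$ while all other rows are bounded by $\|M\|\leq 1$. You instead perform a full QR factorization of the first $m$ columns, pass to a $(p-m)\times(p-m)$ block, and arrange not one but $\ell-m$ rows of the form $w_i^* B$. This is more work, but it buys you the sharper estimate $|\det M|\leq \|B\|^{\ell-m}$, which you then relax to $\|B\|$. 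The paper's route is shorter and avoids the QR machinery; yours extracts more from the dimension gap $\ell-m$.
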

\begin{proof}
Since $m<\ell$, there exists a normalized vector $\psi_1\in \C^{\ell}$ such that
\begin{align}\label{eq:psi1}
\psi_1^T A=0.
\end{align}
Picking another $m-1$ normalized vectors $\{\psi_2,...,\psi_{\ell}\}\subset \C^{\ell}$ such that $\{\psi_1,...,\psi_{\ell}\}$ forms an orthonormal set. 
Let $U=(\psi_1, \psi_2,..., \psi_{\ell})\in \C^{\ell\times \ell}$, then $U$ is a unitary matrix.

Let $R$ be a block matrix defined as
\begin{align*}
R=\left(\begin{matrix}
U^T\ \  & 0\\
0\ \  & I_{(p-\ell)\times (p-\ell)}
\end{matrix}\right).
\end{align*}
We have
\begin{align*}
\tilde{M}:=RM=
\left(\begin{matrix}
U^T A\ &U^T B\\
C & D
\end{matrix}\right),
\end{align*}
and
\begin{align*}
\|\tilde{M}\|\leq \|M\|\cdot \|R\|=\|M\|, \text{ and } |\det \tilde{M}|=|\det M|.
\end{align*}
By Hadamard's inequality, we have
\begin{align}\label{eq:ha1}
|\det \tilde{M}|\leq \prod_{j=1}^p \|v_j\|\leq \|v_1\|,
\end{align}
where $\{v_1,...,v_p\}$ are the rows of $\tilde{M}$ and we used that $\|v_j\|\leq \|\tilde{M}\|\leq 1$ for $2\leq j\leq p$.
Note that the first row of $U^T A=\psi_1^T A=0$ by \eqref{eq:psi1}. Hence
\begin{align}\label{eq:ha2}
\|v_1\|=\|\psi_1^T B\|\leq \|B\|.
\end{align}
Combining \eqref{eq:ha1} with \eqref{eq:ha2}, we have proved Lemma \ref{lem:la}.
\end{proof}

Now we are ready to combine Lemmas \ref{lem:cr1} and \ref{lem:la} to complete the proof of Theorem \ref{thm:main2}. 
Let 
\begin{align*}
\begin{cases}
\mathcal{R}_A:=\{(j, \pi_0(k)): (x_k, y_{\pi_0(k)})\in \mathcal{C} \text{ and either } j=j_0 \text{ or } (x_j, y_{\pi_0(j)})\in \mathcal{C}\}\\
\mathcal{R}_B:=\{(j, \pi_0(k)): (x_k, y_{\pi_0(k)})\notin \mathcal{C} \text{ and either } j=j_0 \text{ or } (x_j, y_{\pi_0(j)})\in \mathcal{C}\}\\
\mathcal{R}_C:=\{(j, \pi_0(k)): j\neq j_0 \text{ and } (x_j,y_{\pi_0(j)})\notin \mathcal{C}, \text{ and } (x_k, y_{\pi_0(k)})\in \mathcal{C}\}\\
\mathcal{R}_D:=\{(j, \pi_0(k)): j\neq j_0 \text{ and } (x_j,y_{\pi_0(j)})\notin \mathcal{C}, \text{ and } (x_k, y_{\pi_0(k)})\notin \mathcal{C}\}
\end{cases}
\end{align*}
By switching rows and columns we arrange the elements of $M^{\omega}(t)=(m_{j\, \pi_0(k)}^{\omega}(t))$ into the form in \eqref{eq:blockM}, where
\begin{align*}
m_{j\, \pi_0(k)}^{\omega}(t)\in E\, \text{ iff } (j, \pi_0(k))\in \mathcal{R}_{E}, \text{ for } E=A,B,C,D.
\end{align*}
Clearly $A\in \C^{(m+1)\times m}$, where $m=|\mathcal{C}|$. 
Hence by Lemma \ref{lem:la}, we have
\begin{align*}
|\det M^{\omega}(t)|\leq \|B\|.
\end{align*}

For a matrix $B$, let $\|B\|_F$ be the Frobenius norm.
Using the two-point assumption \eqref{eq:2point_assume_main2} and convexity, we have
\begin{align}\label{eq:main2_2}
\E_{\omega}\, \sup_t \|B\|\leq \E_{\omega}\, \sup_t \|B\|_{F}
\leq &C \left(\sum_{(j, \pi_0(k))\in \mathcal{R}_B} e^{-2\mu K(|x_j-y_{\pi_0(k)}|)}\right)^{1/2}.
\end{align}
This proves \eqref{eq:main2} by revoking the definition of $\mathcal{R}_B$.

Noting that for each $|x_j-y_{\pi_0(k)}|$ on the right-hand side of \eqref{eq:main2_2}, we have by Lemma \ref{lem:cr1} that
$$|x_j-y_{\pi_0(k)}|\geq D.$$ 
We can bound \eqref{eq:main2_2} to prove \eqref{eq:main2_1}.
\qed

\section{Proof of Theorem \ref{thm:determinant}}
It suffices to prove the following lemma.
\begin{lemma}\label{lem:determinant}
For some constant $C_{d,\mu}>0$, there holds
\begin{align*}
\sum_{\pi\in S_n} e^{-\mu \sum_{j=1}^n |x_j-y_{\pi(j)}|}\leq C_{d,\mu}^n e^{-\frac{\mu}{2\sqrt{d}}D_s(X,Y)}.
\end{align*}
\end{lemma}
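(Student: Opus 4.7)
The plan is to combine an exponent-splitting trick with a permutation-to-function relaxation, which together with a single-site lattice-sum bound yields the claim. Set $D := D_s(X,Y)$ and $T_\pi := \sum_{j=1}^n |x_j - y_{\pi(j)}|$, so that $T_\pi \geq D$ for every $\pi \in S_n$. For a parameter $\tau \in (0,1)$ to be chosen, I would split
\[
\sum_{\pi\in S_n} e^{-\mu T_\pi} = \sum_\pi e^{-\mu(1-\tau)T_\pi}\, e^{-\mu\tau T_\pi} \leq e^{-\mu(1-\tau)D}\,\sum_{\pi\in S_n} e^{-\mu\tau T_\pi},
\]
using $T_\pi\geq D$ in the first factor. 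This reduces matters to bounding the residual permutation sum by $C_{d,\mu}^n$.

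For that residual sum, the key observation is that $S_n$ sits inside the full set of functions $f:\{1,\dots,n\}\to\{1,\dots,n\}$ and the summand is non-negative, so
\[
\sum_{\pi\in S_n}\prod_{j=1}^n e^{-\mu\tau|x_j - y_{\pi(j)}|} \leq \sum_{f}\prod_{j=1}^n e^{-\mu\tau|x_j - y_{f(j)}|} = \prod_{j=1}^n \sum_{k=1}^n e^{-\mu\tau|x_j - y_k|}.
\]
Since $y_1,\dots,y_n$ are distinct points of $\Z^d$, each factor is dominated by a full lattice sum independent of $n$, and the Cauchy--Schwarz inequality $|z|\geq (|z_1|+\cdots+|z_d|)/\sqrt{d}$ lets me factor it into one-dimensional geometric series:
\[
\sum_{k=1}^n e^{-\mu\tau |x_j - y_k|} \leq \sum_{z\in\Z^d} e^{-\mu\tau|z|} \leq \left(\sum_{m\in\Z} e^{-\mu\tau|m|/\sqrt{d}}\right)^d =: C_{d,\mu\tau}.
\]

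Choosing $\tau = 1 - 1/(2\sqrt{d})$ forces $\mu(1-\tau) = \mu/(2\sqrt{d})$, exactly the exponent in the statement, and then $C_{d,\mu}^n := C_{d,\mu(1-1/(2\sqrt{d}))}^n$ depends only on $d$ and $\mu$. Equivalently, matching the dyadic-paths language from the introduction, one can partition $S_n=\bigsqcup_{k\geq 0}A_k$ with $A_k=\{\pi\in S_n: 2^kD\leq T_\pi<2^{k+1}D\}$, apply the same function-relaxation to show $|A_k|\leq e^{2\alpha\cdot 2^kD}C_{d,\alpha}^n$ for any $\alpha>0$, and sum the resulting series $\sum_k e^{(2\alpha-\mu)2^kD}$, which is dominated by its $k=0$ term once $2\alpha<\mu$; tuning $\alpha$ recovers the same exponent.

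The only genuine obstacle is conceptual: one must recognize that relaxing the sum from permutations to all functions $[n]\to[n]$ is affordable because each factor $\sum_k e^{-\mu\tau|x_j-y_k|}$ costs only a $d$-dependent constant, while the lower bound $T_\pi\geq D$ combined with the splitting of $\mu$ supplies the exponential decay. The specific $\sqrt{d}$ in the exponent originates entirely from the Euclidean-to-$\ell^1$ passage in the single-site lattice sum and reflects an unavoidable dimensional cost rather than a feature of the combinatorics.
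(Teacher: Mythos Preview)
Your proof is correct and takes a genuinely different, more elementary route than the paper. The paper passes to the sup-norm distance $|x-y|_\infty$, stratifies permutations by their total sup-norm cost $\ell=\sum_j|x_j-y_{\pi(j)}|_\infty$, and proves a counting lemma $|\mathcal{M}_\ell|\le C_d^n\bigl((2\ell+n)/n\bigr)^{dn}$ via a stars-and-bars decomposition combined with a surface-area bound on lattice points at sup-distance $r_j$ from $x_j$; the sum over $\ell$ is then handled by a case split according to whether $\tilde D\gtrless B_{d,\mu}n$, and the final passage back to the Euclidean $D_s$ costs the factor $\sqrt d$ through $\tilde D\ge D/\sqrt d$. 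Your exponent-splitting plus permutation-to-function relaxation sidesteps all of this combinatorics: the relaxed sum factors over $j$, and each factor is dominated by the convergent single-site lattice sum $\sum_{z\in\Z^d}e^{-\mu\tau|z|}$, which depends only on $d$ and $\mu\tau$. The decay exponent then comes entirely from the reserved factor $e^{-\mu(1-\tau)T_\pi}$.

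One correction to your closing commentary: the claim that the $\sqrt d$ in the exponent ``originates entirely from the Euclidean-to-$\ell^1$ passage in the single-site lattice sum'' and ``reflects an unavoidable dimensional cost'' is not accurate for your own argument. In your proof the exponent is $\mu(1-\tau)$ for any $\tau\in(0,1)$; the lattice sum converges for every such $\tau$ and affects only the constant $C_{d,\mu}$. Thus you actually obtain the stronger bound $C_{d,\mu,\epsilon}^n\,e^{-(\mu-\epsilon)D_s(X,Y)}$ for every $\epsilon>0$, and the specific choice $\tau=1-1/(2\sqrt d)$ is made only to match the stated exponent. The $\sqrt d$ loss in the paper is a genuine artifact of its sup-norm detour, not an intrinsic feature.
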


Throughout this section, we denote $D_s(X,Y)$ by $D$ for simplicity.
In this section, for arbitrary two points $x,y\in \Z^d$, we denote $|x-y|_{\infty}:=\max_{j=1}^d |x_j-y_j|$.
We introduce $\tilde{D}_s(X,Y)$ similar to $D_s(X,Y)$ but with the Euclidean norms replaced with the sup norms:
\begin{align*}
\tilde{D}_s(X,Y):=\min_{\pi\in S_n} \sum_{j=1}^n |x_j-y_{\pi(j)}|_{\infty}.
\end{align*}
Throughout this section we shall denote $\tilde{D}_s(X,Y)$ by $\tilde{D}$ for simplicity.
Clearly, we have
\begin{align}\label{eq:D_tildeD}
\tilde{D}\leq D\leq \sqrt{d}\, \tilde{D}.
\end{align}
The reason we introduce $\tilde{D}$ is that it only takes integer value for $X,Y\subset \Z^d$, which plays a crucial role in our counting argument in Lemma \ref{lem:counting}.

Let $B_{d,\mu}\geq 1$ be the smallest number such that
\begin{align}\label{eq:D>1}
(3x)^{d}\leq e^{\frac{\mu}{2}x}, \text{ for all } x\geq B_{d,\mu}.
\end{align}

Let $\pi_0$ be a permutation such that
\begin{align*}
\sum_{j=1}^n |x_j-y_{\pi_0(j)}|_{\infty}=\min_{\pi}\sum_{j=1}^n |x_j-y_{\pi(j)}|_{\infty}=\tilde{D}.
\end{align*}
In case of there are multiple $\pi_0$ that attains the minimum value $\tilde{D}$, we will simply choose one of them.
The following counting lemma is the key to the proof of the determinant bound.
\begin{lemma}\label{lem:counting}
For any integer $\ell \geq \tilde{D}$, let $\mathcal{M}_{\ell}$ be 
\begin{align*}
\mathcal{M}_{\ell}:=\{\pi\in S_n: \sum_{j=1}^n |x_j-y_{\pi(j)}|_{\infty}=\ell\}.
\end{align*}
Then we have the following estimate on the cardinality of $M_{\ell}$.
\begin{align*}
|\mathcal{M}_{\ell}|\leq C_d^n \left(\frac{2\ell+n}{n}\right)^{dn}.
\end{align*}
\end{lemma}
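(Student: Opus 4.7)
The plan is to reduce the counting of permutations to the counting of lattice-point displacement tuples in $(\Z^d)^n$ with prescribed total sup-norm sum, and then combine a sharp sphere-shell count with AM-GM.

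First I would drop both the distinctness of the $y_{\pi(j)}$'s and the requirement $y_{\pi(j)}\in Y$. Since $Y\subset\Z^d$, setting $w_j:=x_j-y_{\pi(j)}$ and stratifying by $r_j:=|w_j|_\infty\in\Z_{\geq 0}$ gives
\begin{align*}
|\mathcal{M}_\ell|\leq \#\Bigl\{(w_1,\ldots,w_n)\in(\Z^d)^n:\sum_{j=1}^n|w_j|_\infty=\ell\Bigr\}=\sum_{\substack{r_1+\cdots+r_n=\ell\\ r_j\geq 0}}\prod_{j=1}^n S(r_j),
\end{align*}
where $S(r):=\#\{w\in\Z^d:|w|_\infty=r\}$.

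The crucial quantitative input, and the step I expect to be the main obstacle, is the sphere-shell bound $S(r)\leq 2d(2r+1)^{d-1}$ valid for every $r\geq 0$. For $r\geq 1$ this follows from $S(r)=(2r+1)^d-(2r-1)^d$ and the mean value theorem, while $S(0)=1\leq 2d$ handles $r=0$. Using the cruder ball-size bound $S(r)\leq(2r+1)^d$ in its place would cost an extra factor of $(2\ell+n)/n$ per variable and yield only the exponent $(d+1)n$ in the final estimate, which is not sharp enough for the application to Theorem \ref{thm:determinant}.

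With the sphere bound in hand, AM-GM applied to the $n$ positive quantities $(2r_j+1)$, together with the identity $\sum_j(2r_j+1)=2\ell+n$, supplies
\begin{align*}
\prod_{j=1}^n(2r_j+1)^{d-1}\leq\left(\frac{2\ell+n}{n}\right)^{(d-1)n}.
\end{align*}
Finally I would bound the number of nonnegative integer compositions of $\ell$ into $n$ parts, which is $\binom{\ell+n-1}{n-1}$, by $C^n((2\ell+n)/n)^n$ via the standard estimate $\binom{a}{b}\leq(ea/b)^b$. Multiplying these three ingredients together absorbs one remaining factor of $(2\ell+n)/n$ into the desired power and yields $|\mathcal{M}_\ell|\leq C_d^n((2\ell+n)/n)^{dn}$ with $C_d$ proportional to $d$, as claimed.
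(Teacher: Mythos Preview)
Your proof is correct and follows essentially the same route as the paper: stratify by the distance profile $(r_1,\ldots,r_n)$, use the sphere-shell count $\#\{w\in\Z^d:|w|_\infty=r\}\leq 2d(2r+1)^{d-1}$, apply AM--GM to the product $\prod_j(2r_j+1)^{d-1}$, and bound the number of compositions $\binom{\ell+n-1}{n-1}$ by Stirling/the elementary binomial estimate. The only cosmetic differences are that you make the reduction to displacement tuples explicit and derive the sphere-shell bound via $(2r+1)^d-(2r-1)^d$ and the mean value theorem, whereas the paper states it directly; your side remark that the cruder ball bound would be fatal for the application is slightly overstated (the exponent $(d+1)n$ could still be absorbed by enlarging the threshold $B_{d,\mu}$), but this does not affect the argument.
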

\begin{proof}
For $\ell \in \N$,
let 
\begin{align}\label{def:Sl}
\mathcal{S}_{\ell}:=\{r=(r_1,...,r_n)\in \N^n: \sum_{j=1}^n r_j=\ell.\}
\end{align}
We have, by Stirling's formula
\begin{align}\label{eq:Sl_1}
|\mathcal{S}_{\ell}|={\ell+n-1 \choose n-1}\leq \frac{C}{\sqrt{n}} e^n \left(\frac{\ell+n}{n}\right)^n.
\end{align}
For each $r\in \mathcal{S}_{\ell}$, let 
\begin{align}\label{def:Tr}
\mathcal{T}_r:=\{\pi: |x_j-y_{\pi(j)}|_{\infty}=r_j, \text{ for each } j\in \{1,...,n\}\}.
\end{align}
Note that there exists at most $2d\cdot (2r_j+1)^{d-1}$ many integer point $z$'s in $\Z^d$ such that 
\begin{align*}
|x_j-z|_{\infty}=r_j.
\end{align*}
Hence there are at most $2d\cdot (2r_j+1)^{d-1}$ many possible values for $\pi(j)$ such that $$|x_j-y_{\pi(j)}|_{\infty}=r_j.$$
Thus for each $r\in \mathcal{S}_{\ell}$, we bound the geometric mean of $2r_j+1$ by the arithmetic mean and obtain that
\begin{align}\label{eq:Tr1}
|\mathcal{T}_r|\leq (2d)^n \prod_{j=1}^n (2r_j+1)^{d-1}\leq (2d)^n \left(\frac{2\sum_{j=1}^n r_j+n}{n}\right)^{(d-1)n}=(2d)^n \left(\frac{2\ell+n}{n}\right)^{(d-1)n}.
\end{align}
Combining \eqref{eq:Sl_1} with \eqref{eq:Tr1}, we have
\begin{align*}
|\mathcal{M}_{\ell}|
=|\mathcal{S}_{\ell}|\cdot |\mathcal{T}_r|
\leq &(2 d e)^n \left(\frac{2\ell+n}{n}\right)^{dn}.
\end{align*}
This proves the lemma.
\end{proof}
Now we are in the position to prove Theorem \ref{thm:determinant}.
We have by Lemma \ref{lem:counting},
\begin{align}\label{eq:Ml}
\sum_{\pi\in S_n} \prod_{j=1}^n |m_{j\, \pi(j)}|
\leq &C^n \sum_{\pi\in S_n} e^{-\mu \sum_{j=1}^n |x_j-y_{\pi(j)}|}\\
\leq &C^n \sum_{\pi\in S_n} e^{-\mu \sum_{j=1}^n |x_j-y_{\pi(j)}|_{\infty}}\\
\leq &C^n \sum_{\ell=\tilde{D}}^{\infty} \sum_{\pi\in \mathcal{M}_{\ell}} e^{-\mu\sum_{j=1}^n |x_j-y_{\pi(j)}|_{\infty}} \notag\\
\leq &C_d^n \sum_{\ell=\tilde{D}}^{\infty} \left(\frac{2\ell+n}{n}\right)^{dn} e^{-\mu \ell}.
\end{align}
Let $B_{d,\mu}$ be defined as in \eqref{eq:D>1}.
Next we distinguish two cases:

{\it Case 1.} If $\tilde{D}<B_{d,\mu}n$.
We split the sum in \eqref{eq:Ml} into
\begin{align}\label{eq:MI3}
\sum_1:=\sum_{\ell=\tilde{D}}^{B_{d,\mu} n} C_d^n \left(\frac{2\ell+n}{n}\right)^{dn} e^{-\mu \ell},
\end{align}
and 
\begin{align}\label{eq:MI4}
\sum_2:=\sum_{\ell>B_{d,\mu} n} C_d^n \left(\frac{2\ell+n}{n}\right)^{dn} e^{-\mu \ell},
\end{align}
To estimate $\sum_1$, we bound $\ell$ by $B_{d,\mu}n$, which yields
\begin{align}\label{eq:MI6}
\sum_1\leq \sum_{\ell=D}^{B_{d,\mu}n}C_d^n (2B_{d,\mu}+1)^{dn} e^{-\mu \ell}\leq C_{d,\mu}^n e^{-\mu \tilde{D}}.
\end{align}

Next, we estimate $\sum_2$. Since $\ell\geq B_{d,\mu}n\geq n$, we have by \eqref{eq:D>1},
\begin{align}\label{eq:MI2}
\left(\frac{2\ell+n}{n}\right)^{dn}\leq \left(\frac{3\ell}{n}\right)^{dn}\leq e^{\frac{\mu}{2}\ell}.
\end{align}
Hence 
\begin{align}\label{eq:MI5}
\sum_2\leq \sum_{\ell>B_{d,\mu}n} C_d^n e^{-\frac{\mu}{2}\ell}\leq C_d^n e^{-\frac{\mu}{2}\tilde{D}}.
\end{align}

{\it Case 2.} If $\tilde{D}\geq B_{d,\mu}n$. Since $\ell\geq \tilde{D}\geq B_{d,\mu}n\geq n$, we can argue as in \eqref{eq:MI2} to bound the sum in \eqref{eq:Ml} as follows
\begin{align}\label{eq:MI8}
\sum_{\pi\in S_n} \prod_{j=1}^n  |m_{j\, \pi(j)}| \leq \sum_{\ell\geq \tilde{D}} C_d^n e^{-\frac{\mu}{2}\ell}\leq C_d^n e^{-\frac{\mu}{2}\tilde{D}}.
\end{align}

Combining \eqref{eq:Ml} with \eqref{eq:MI6}, \eqref{eq:MI5}, \eqref{eq:MI8} and \eqref{eq:D_tildeD}, we have proved Theorem \ref{thm:determinant}. \qed

\section{Proof of Theorem \ref{thm:Pfaffian}}
Throughout this section, we denote $\{1,2,...,2n\}=:[2n]$.
It is easy to see that $D_s(X)$ has the following alternate representation in terms of the $D_s$ distance that we introduced earlier:
\begin{align}\label{eq:DsX=DsXY}
D_s(X)=\min_{\substack{Y\subset X\\ |Y|=n}} D_s(X\setminus Y, Y).
\end{align}

Let 
\begin{align}\label{def:A2n}
\mathcal{A}_{2n}:=\{\pi\in S_{2n}: \pi(1)<\pi(3)<\cdots <\pi(2n-1)\}.
\end{align}
This set can be decomposed as a disjoint union as follows:
\begin{align}
\mathcal{A}_{2n}=\sqcup_{B=\{b_1<b_2<...<b_n\}\subset [2n]} \mathcal{A}_B,
\end{align}
where 
\begin{align}
\mathcal{A}_B=\{\pi\in S_{2n}: \pi(2j-1)=b_j\, \text{ for } 1\leq j\leq n\}.
\end{align}
Let $Y_B\subset X$ be such that
\begin{align}
Y_B:=\{x_{b_1}<x_{b_2}<...<x_{b_n}\}.
\end{align}
Next we estimate
\begin{align}\label{eq:Pf1}
\frac{1}{n!}\sum_{\pi\in S_{2n}} \prod_{j=1}^n |m_{\pi(2j-1)\, \pi(2j)}|
=&\sum_{\pi \in \mathcal{A}_{2n}} \prod_{j=1}^n |m_{\pi(2j-1)\, \pi(2j)}|\\
\leq &C^n \sum_{\pi\in \mathcal{A}_{2n}} e^{-\mu \sum_{j=1}^{n} |x_{\pi(2j-1)}-x_{\pi(2j)}|}\\
=&C^n\sum_{\substack{B\subset [2n]\\ |B|=n}} \sum_{\pi\in \mathcal{A}_B} e^{-\mu \sum_{j=1}^{n} |x_{b_j}-x_{\pi(2j)}|}.
\end{align}
By Lemma \ref{lem:determinant} and \eqref{eq:DsX=DsXY}, we have
\begin{align}\label{eq:Pf2}
\sum_{\pi\in \mathcal{A}_B} e^{-\mu \sum_{j=1}^{n} |x_{b_j}-x_{\pi(2j)}|}\leq C_{d,\mu}^n e^{-\frac{\mu}{2\sqrt{d}} D_s(X\setminus Y_B, Y_B)}
\leq C_{d,\mu}^n e^{-\frac{\mu}{2\sqrt{d}} D_s(X)}.
\end{align}
Hence combining \eqref{eq:Pf1} with \eqref{eq:Pf2}, and using the Stirling's formula, we have
\begin{align}
\frac{1}{n!}\sum_{\pi\in S_{2n}} \prod_{j=1}^n |m_{\pi(2j-1)\, \pi(2j)}|
\leq &C_{d,\mu}^n \cdot |\{B: B\subset [2n], |B|=n\}| \cdot e^{-\frac{\mu}{2\sqrt{d}} D_s(X)}\\
\leq &C_{d,\mu}^n e^{-\frac{\mu}{2\sqrt{d}} D_s(X)},
\end{align}
which proves Theorem \ref{thm:Pfaffian}. \qed

\section{Proof of Theorem \ref{thm:Ising_d>4}: multi-point correlation bound for the Ising model}\label{sec:Ising}
For simplicity, in this section we shall write $\la \cdot \ra_{\beta}$ as $\la \cdot \ra$.
The truncated (four-point) correlation function $U_4$ is the following:
\begin{align*}
U_4(x_1,x_2,x_3,x_4)=\la \sigma_{x_1}\sigma_{x_2}\sigma_{x_3}\sigma_{x_4}\ra-
(\la \sigma_{x_1}\sigma_{x_2}\ra \la\sigma_{x_3}\sigma_{x_4}\ra+\la\sigma_{x_1}\sigma_{x_3}\ra \la\sigma_{x_2}\sigma_{x_4}\ra+\la \sigma_{x_1}\sigma_{x_4}\ra \la \sigma_{x_2}\sigma_{x_3}\ra)
\end{align*}
Proposition 5.3 in \cite{A82} provides a tree diagram bound for $U_4$ as follows.
\begin{proposition}\cite{A82}
In a ferromagnetic system, for any four points, we have
\begin{align*}
|U_4(x_1,x_2,x_3,x_4)|\leq 2\sum_{y\in \Z^d} \la \sigma_{x_1}\sigma_y\ra \la \sigma_{x_2}\sigma_y\ra \la \sigma_{x_3}\sigma_y\ra \la \sigma_{x_4}\sigma_y\ra.
\end{align*}
\end{proposition}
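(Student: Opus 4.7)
The plan is to establish this tree-diagram inequality via the random current representation of the ferromagnetic Ising model developed by Aizenman in \cite{A82}. First, expand $e^{\beta J_{xy}\sigma_x\sigma_y}=\sum_{n\geq 0}(\beta J_{xy})^n \sigma_x^n\sigma_y^n/n!$ on every bond and carry out the spin sums, yielding the representation
\begin{align*}
\la \sigma_A\ra = Z(A)/Z(\emptyset),\qquad Z(A) = \sum_{\bs n:\,\partial\bs n = A}\prod_b (\beta J_b)^{n_b}/n_b!,
\end{align*}
where $\bs n=(n_b)_b$ runs over integer-valued bond currents whose odd-parity vertex set $\partial\bs n$ coincides with $A$.

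The central tool is the switching lemma of Griffiths-Hurst-Sherman, which for any $A,B\subset\Lambda$ reads
\begin{align*}
\la\sigma_A\ra\la\sigma_B\ra = \la\sigma_{A\triangle B}\ra \cdot \mathbb{P}_{A\triangle B,\emptyset}\!\left[\mathcal{E}_B\right],
\end{align*}
where $\mathbb{P}_{A\triangle B,\emptyset}$ is the normalized double-current measure on pairs $(\bs n_1,\bs n_2)$ with sources $A\triangle B$ and $\emptyset$, and $\mathcal{E}_B$ is the event that $\bs n_1+\bs n_2$ admits a sub-current with source set $B$. Applied to each of the three pairings of $\{x_1,x_2,x_3,x_4\}$, this rewrites the three subtracted products in $U_4$ as integrals against a single pair of currents of combined sources $\{x_1,\ldots,x_4\}$ and $\emptyset$, each restricted by a pairwise connectivity event. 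Comparing with the representation of $\la\sigma_{x_1}\sigma_{x_2}\sigma_{x_3}\sigma_{x_4}\ra$ and performing the inclusion-exclusion over the three pairings reduces $|U_4|$ to the total mass of current pairs whose connectivity structure is incompatible with any two-pair partition of the sources, i.e., configurations for which $x_1,x_2,x_3,x_4$ all lie in a single connected cluster of $\bs n_1+\bs n_2$.

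On this single-cluster event, select a branching vertex $y\in\Z^d$ at which the current paths emanating from $x_1$ first meet the paths leading to $x_2,x_3,x_4$; splitting $\bs n_1+\bs n_2$ at $y$ into four sub-currents from $y$ to each $x_i$ and undoing the switching lemma on each piece bounds the contribution of that configuration by $\prod_{i=1}^4\la\sigma_{x_i}\sigma_y\ra$. Summing over $y\in\Z^d$ and absorbing the overall combinatorial constant $2$, which comes from the enumeration of branching conventions, yields the claim. The main obstacle is the careful inclusion-exclusion bookkeeping at the cancellation step: one must verify that after subtracting the three pairings the surviving current configurations are precisely those with all four points in one cluster, without residual signed overcounting. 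This is the technical core of Proposition~5.3 of \cite{A82}, whose argument we would adopt.
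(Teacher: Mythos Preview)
The paper does not give its own proof of this proposition; it is quoted verbatim as Proposition~5.3 of \cite{A82} and used as a black box in Section~\ref{sec:Ising}. Your proposal correctly identifies it as such and sketches Aizenman's random-current/switching-lemma argument, so there is nothing to compare: both you and the paper are deferring to \cite{A82}.
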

As an easy corollary we have
\begin{align}\label{eq:U4}
|U_4(x_1,x_2,x_3,x_4)|\leq C_d e^{-\frac{\mu}{2}(|x_1-x_2|+|x_3-x_4|)}.
\end{align}
For any $2n$ points, let 
\begin{align*}
G_{2n}(x_1,...,x_{2n}):=\frac{1}{2^n} \sum_{\pi\in \mathcal{A}_{2n}}\prod_{j=1}^n \la \sigma_{x_{\pi(2j-1)}}\sigma_{x_{\pi(2j)}}\ra,
\end{align*}
where $\mathcal{A}_{2n}$ is defined in \eqref{def:A2n}.
Proposition 12.1 in \cite{A82} provides an error estimate \footnote{\cite{A82} provides both upper and lower bounds, but we only need the upper bound here.} as follows.
\begin{proposition}\cite{A82}
In a ferromagnetic Ising system, 
\begin{align*}
&|\la \sigma_{x_1}\cdots \sigma_{x_{2n}}\ra-G_{2n}(x_1,...,x_{2n})|\\
\leq &\sum_{1\leq j_1<j_2<j_3<j_4\leq 2n} |U_4(x_{j_1},x_{j_2},x_{j_3},x_{j_4})|
G_{2n-4}(...,\check{x}_{j_1},...,\check{x}_{j_2},...,\check{x}_{j_3},...,\check{x}_{j_4},...),
\end{align*}
where $\check{}$ indicates an omitted site.
\end{proposition}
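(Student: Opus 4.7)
The plan is to derive the inequality from the cumulant expansion of $\la \sigma_A\ra$ together with sign and tree-diagram inequalities from the random-current representation of the ferromagnetic Ising model. First I would invoke the general cumulant identity
\[
\la \sigma_{x_1}\cdots\sigma_{x_{2n}}\ra=\sum_{P}\prod_{B\in P}U^T_{|B|}(B),
\]
with $P$ ranging over set partitions of $\{1,\ldots,2n\}$ and $U^T_k$ the $k$th truncated correlation (cumulant). At $h=0$ the $\Z_2$ spin-flip symmetry annihilates every partition containing an odd-size block, so only even-block partitions contribute; the pair partitions reassemble exactly into $G_{2n}(A)$, leaving
\[
\la \sigma_A\ra-G_{2n}(A)=\sum_{P:\,\exists B\in P,\;|B|\geq 4}\;\prod_{B\in P}U^T_{|B|}(B).
\]

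Second, I would establish a tree-diagram bound that reduces any truncated block of size $2k\geq 4$ to a factor of $|U_4|$ times a Gaussian (Wick) factor on the remaining $2k-4$ sites:
\[
|U^T_{2k}(B)|\leq \sum_{\{i_1,\ldots,i_4\}\subset B}|U_4(x_{i_1},x_{i_2},x_{i_3},x_{i_4})|\cdot G_{2k-4}(B\setminus\{x_{i_1},x_{i_2},x_{i_3},x_{i_4}\}).
\]
This is the analytic heart of \cite{A82}: express $U^T_{2k}(B)$ as a sum over connected random-current configurations with sources $B$, then apply the switching lemma to pair off all but four of the sources, leaving a $|U_4|$ factor on the four retained sources and a Wick-like sum of pair couplings on the rest. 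Lebowitz's inequality $U_4\leq 0$, together with its higher-cumulant sign analogues, supplies the uniform sign control needed to pass to absolute values.

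Third, I would substitute this bound into each term of the cumulant remainder and regroup. For each partition $P$ containing a distinguished block $B_0$ of size $2k\geq 4$, the tree-diagram bound produces a quadruple $\{x_{j_1},\ldots,x_{j_4}\}\subset B_0$ together with a Wick pairing on $B_0\setminus\{x_{j_1},\ldots,x_{j_4}\}$; multiplying by the two-point products from the remaining blocks of $P$ reconstructs a full pair partition of $A\setminus\{x_{j_1},\ldots,x_{j_4}\}$, and summing over all such $P$ builds up exactly $G_{2n-4}(\check x_{j_1},\check x_{j_2},\check x_{j_3},\check x_{j_4})$. To prevent double-counting partitions with several high-order blocks, one fixes a canonical rule for the pivot quadruple, for instance the lexicographically smallest quadruple lying in a block of size $\geq 4$; each triple (quadruple, Wick pairing of the complement) is then hit exactly once.

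The main obstacle is the tree-diagram bound in Step 2. The remainder expansion and the final regrouping are essentially combinatorial, but propagating the four-point Lebowitz cancellation into a bound on every even cumulant requires genuine analytic input, which in \cite{A82} comes from the random-current switching lemma. Alternative routes through iterated GHS, GKS, or Simon--Lieb inequalities are conceivable but considerably more delicate, and any such route must be synchronized carefully with the combinatorial reorganization in Step 3 so that the complementary Wick factor emerges in the precise form $G_{2n-4}$ on the right-hand side.
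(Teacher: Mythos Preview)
This proposition is not proved in the paper under review; it is quoted as Proposition~12.1 of \cite{A82} and invoked as a black box in Section~\ref{sec:Ising}. There is therefore no ``paper's own proof'' to compare against here.

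On the merits of your sketch: Step~3 contains a genuine gap. When you fix a partition $P$ with a distinguished block $B_0$ of size $\geq 4$ and apply the Step~2 bound to $B_0$, the \emph{remaining} blocks $B_1,\ldots,B_m$ of $P$ need not be pairs --- any of them may itself have size $\geq 4$ and thus contribute a higher cumulant $U^T_{|B_i|}(B_i)$, not a two-point product. Your phrase ``multiplying by the two-point products from the remaining blocks of $P$'' is therefore not valid in general, and the regrouping does not reconstruct $G_{2n-4}$ on the complement of the quadruple. Attempting to repair this by also applying Step~2 to every large block produces several $|U_4|$ factors rather than one, which is a different (weaker) inequality. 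Aizenman's actual argument in \cite{A82} does not pass through the full cumulant expansion; it proceeds by a direct telescoping that peels off pairs from the $2n$-point function using the random-current switching inequalities, so that the $U_4$ error appears once and the residual factor is already a full correlation, which is then dominated by its Gaussian majorant $G_{2n-4}$.
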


By our Theorem \ref{thm:Pfaffian}, with $m_{jk}:=\la \sigma_{x_j} \sigma_{x_k} \ra$, we have
\begin{align}\label{eq:G2n}
G_{2n}(x_1,...,x_{2n})\leq C_{d,\mu}^n e^{-\frac{\mu}{2\sqrt{d}}D_s(X)},
\end{align}
and
\begin{align}\label{eq:G2n-4}
G_{2n-4}(...,\check{x}_{j_1},...,\check{x}_{j_2},...,\check{x}_{j_3},...,\check{x}_{j_4},...)\leq C_{d,\mu}^n e^{-\frac{\mu}{2\sqrt{d}}D_s(X\setminus \{x_{j_1},x_{j_2},x_{j_3},x_{j_4}\})}.
\end{align}
Combining \eqref{eq:G2n-4} with \eqref{eq:U4} we have
\begin{align}\label{eq:G2n-4_U4}
&|U_4(x_{j_1},x_{j_2},x_{j_3},x_{j_4})| G_{2n-4}(...,\check{x}_{j_1},...,\check{x}_{j_2},...,\check{x}_{j_3},...,\check{x}_{j_4},...) \notag\\
\leq &C_{d,\mu}^n e^{-\frac{\mu}{2}(|x_{j_1}-x_{j_2}|+|x_{j_3}-x_{j_4}|)}e^{-\frac{\mu}{2\sqrt{d}} D_s(X\setminus \{x_{j_1},x_{j_2},x_{j_3},x_{j_4}\})} \notag \\
\leq &C_{d,\mu}^n e^{-\frac{\mu}{2\sqrt{d}} D_s(X)}.
\end{align}
Theorem \ref{thm:Ising_d>4} then follows from combining \eqref{eq:G2n} with \eqref{eq:G2n-4_U4}.  \qed

\section{Proof of Multi-point dynamical localization}

\subsection{Proof of Theorem \ref{thm:ULE_loc}}
The following notion of homegeneous ULE was introduced in \cite{uniform}.
\begin{definition}\label{def:HULE}
$H_{\omega}$ has homogeneous ULE in a set $S$ means that $H_{\omega}$ has ULE for any $\omega\in S$ and
\begin{align*}
|\phi_n^{\omega}(m)|\leq C e^{-\mu |m-m_n^{\omega}|},
\end{align*}
for some constants $C,\mu>0$ independent of $\omega$.
\end{definition}
The following theorem was proved in \cite{uniform}:
\begin{theorem}\label{thm:uniform}
If $H_{\omega}$ has ULE for $\omega$ in a positive measure set, then $H_{\omega}$ has homogeneous ULE in $\mathrm{supp}(\nu)$. 
Also if $T$ is minimal, and $U_{\omega}$ has ULE at a single $\omega$, then $H_{\omega}$ has homogeneous ULE in $\Omega$.
\end{theorem}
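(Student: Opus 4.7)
The plan is to proceed in four stages: reduction to uniform constants by countable exhaustion, translation invariance plus ergodicity to extend to full measure, a closure argument to reach all of $\mathrm{supp}(\nu)$, and the minimality refinement.

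First, I would decompose the ULE hypothesis by rational constants. For each $(C,\mu)\in \Q_+\times\Q_+$, let $S_{C,\mu}$ denote the set of $\omega$ for which $H_\omega$ admits a complete orthonormal eigenbasis $\{\phi_n^\omega\}_{n\geq 1}$ with some centers $m_n^\omega\in\Z^d$ satisfying $|\phi_n^\omega(m)|\leq C\,e^{-\mu|m-m_n^\omega|}$ for all $m,n$. The hypothesis puts positive $\nu$-mass on $\bigcup_{(C,\mu)\in\Q_+^2} S_{C,\mu}$, so by countable additivity a single $S_{C_0,\mu_0}$ already has positive measure.

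Second, I would exploit the covariance $H_{T^x\omega}=U_x H_\omega U_x^{-1}$ to show that $S_{C_0,\mu_0}$ is invariant under every $T^x$. If $\omega\in S_{C_0,\mu_0}$, then $\{U_x\phi_n^\omega\}$ is an orthonormal eigenbasis of $H_{T^x\omega}$, and using $U_x\phi(y)=\phi(y-x)$ one obtains $|(U_x\phi_n^\omega)(m)|=|\phi_n^\omega(m-x)|\leq C_0\,e^{-\mu_0 |m-(m_n^\omega+x)|}$, so $m_n^\omega+x$ serve as the shifted centers and $T^x\omega\in S_{C_0,\mu_0}$. Since the $\Z^d$-action generated by $\{T_j\}$ is jointly ergodic and $S_{C_0,\mu_0}$ is a non-null invariant set, $\nu(S_{C_0,\mu_0})=1$: homogeneous ULE holds $\nu$-almost surely with constants $(C_0,\mu_0)$.

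Third, to pass from a full-measure set to all of $\mathrm{supp}(\nu)$, fix $\omega^\ast\in\mathrm{supp}(\nu)$ and choose $\omega_k\in S_{C_0,\mu_0}$ with $\omega_k\to\omega^\ast$, available since every neighborhood of $\omega^\ast$ has positive $\nu$-mass. Under the standing continuity assumption on $\omega\mapsto H_\omega$ (strong continuity on $\mathrm{supp}(\nu)$, which is routinely available in the ergodic settings considered in \cite{uniform}), one has $H_{\omega_k}\to H_{\omega^\ast}$ in strong resolvent sense, and for any bounded interval $I\subset\R$ whose endpoints are not in $\sigma_p(H_{\omega^\ast})$, $P_I(H_{\omega_k})\to P_I(H_{\omega^\ast})$ strongly. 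The uniform decay $(C_0,\mu_0)$ transfers to matrix elements of $P_I(H_{\omega_k})$ and is preserved in the limit; a compactness and diagonal-extraction argument on $\mathrm{Range}(P_I(H_{\omega^\ast}))$, with centers taken to be cluster points of the $m_n^{\omega_k}$, assembles an orthonormal eigenbasis of $H_{\omega^\ast}$ on $I$ obeying the same bound. Completeness is verified by comparing traces of the projectors across a partition of $\R$, which exhaust $\ell^2(\Z^d)$ uniformly in $k$ and thus also in the limit.

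Fourth, the minimality part falls out of the same machinery: if $H_{\omega_0}$ has ULE at a single $\omega_0$ with constants $(C,\mu)$, step two immediately yields ULE with the same constants on the orbit $\{T^x\omega_0:x\in\Z^d\}$, which by minimality is dense in $\Omega$; the closure argument of step three then transfers ULE with constants $(C,\mu)$ to every $\omega\in\Omega$. The main obstacle throughout is step three: eigenvalues and eigenfunctions can rearrange discontinuously as $\omega_k\to\omega^\ast$, and centers $m_n^{\omega_k}$ can escape to infinity so that mass is lost in the weak limit of individual $\phi_n^{\omega_k}$. The remedy is to reason projector-by-projector on short spectral windows, where uniform ULE supplies both finite rank of $P_I$ and equicontinuous exponential decay of its kernel, guaranteeing that the captured limiting eigenfunctions have well-defined centers and collectively span $P_I(H_{\omega^\ast})$.
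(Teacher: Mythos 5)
The paper does not actually prove Theorem \ref{thm:uniform}; it is quoted from \cite{uniform}, so your argument has to stand on its own, and its first two stages do: the rational exhaustion of constants plus covariance ($m_n^\omega\mapsto m_n^\omega+x$) and ergodicity correctly give homogeneous ULE with fixed $(C_0,\mu_0)$ for $\nu$-a.e.\ $\omega$ (you should at least remark why each $S_{C,\mu}$ is measurable, since ULE is an existential statement about eigenbases, but this is repairable). The genuine gap is in stage three. Your remedy rests on the claim that on short spectral windows ``uniform ULE supplies finite rank of $P_I$''; this is false on $\ell^2(\Z^d)$: for an ergodic operator with localized spectrum the eigenvalues are typically dense in the (essential) spectrum, so $P_I(H_{\omega_k})$ has infinite rank for every window meeting the spectrum, and ``comparing traces of the projectors across a partition of $\R$'' is vacuous because those traces are infinite. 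Without finite rank, strong convergence $P_I(H_{\omega_k})\to P_I(H_{\omega^*})$ together with limits of individual eigenfunctions does not give completeness: nothing in your argument excludes that $H_{\omega^*}$ has continuous spectrum in $I$, i.e.\ that the limiting localized eigenfunctions fail to span $\mathrm{Ran}\,P_I(H_{\omega^*})$. Two further points you assert but do not prove: the transfer of the $(C_0,\mu_0)$ decay to the kernel of $P_I(H_{\omega_k})$ requires the quantitative lemma that each normalized eigenfunction carries a definite amount of mass within a bounded distance of its center, whence the number of centers per unit cube is uniformly bounded (otherwise the sum over eigenfunctions with energy in $I$ is uncontrolled); and the strong continuity of $\omega\mapsto H_\omega$ is an extra hypothesis beyond Definition \ref{def:ergodic} (which only asks strong measurability) — it is indeed part of the topological setting of \cite{uniform}, but it must be stated, and it is also needed to make sense of $\mathrm{supp}(\nu)$ and minimality. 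Since your minimality step reuses the stage-three closure argument, it inherits the same gap.

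The way the cited proof avoids all of this is instructive: one never passes eigenfunctions to the limit. Instead one works with a covariant scalar two-point quantity — uniform bounds of the form $\sup_t|\la \delta_x, e^{-itH_\omega}\delta_y\ra|\le Ce^{-\mu|x-y|}$, equivalently uniform exponential bounds on the kernels of all eigenprojections — which by \cite{DJLS}-type arguments is equivalent to ULE up to a change of constants. This quantity is invariant under the shifts by covariance, hence a.s.\ constant by ergodicity; each fixed matrix element is continuous in $\omega$ under strong (resolvent) continuity, so the bound is manifestly closed under $\omega_k\to\omega^*$ and extends to all of $\mathrm{supp}(\nu)$ (and along a dense orbit to all of $\Omega$ in the minimal case). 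Pure point spectrum of $H_{\omega^*}$ (your completeness problem) then comes for free from dynamical localization via a RAGE-type argument, and ULE is recovered from the limiting kernel bounds by time-averaging onto eigenprojections together with the lower bound at the localization center. If you want to salvage your projector-by-projector scheme, you would have to supply precisely these ingredients — a uniform kernel bound surviving the limit for \emph{every} small window plus an argument that this forces pure point spectrum — which is the missing idea in your stage three.
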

For homogeneous ULE $H_{\omega}$, a direct computation as in \cite{DJLS} shows that
\begin{align*}
\sup_{s,t\in I} |\la \delta_x, \rho(s,t,H_{\omega}) \delta_y\ra|\leq C e^{-\mu' |x-y|},
\end{align*}
holds for some $\omega$ independent parameter $0<\mu'<\mu$ and any $\omega\in \mathrm{supp}(\nu)$. This provides the point-wise assumption \eqref{eq:2point_pw} that is required in Theorem \ref{thm:determinant}.
Therefore, we have
\begin{align*}
\sup_{s,t\in I^n} \left|\det( \la \delta_{x_j}, \rho(s_j, t_k, H_{\omega}) \delta_{y_k}\ra )\right|\leq C_{\mu',d}^n e^{-\mu' D_s(X,Y)},
\end{align*}
for any $\omega\in \mathrm{supp}(\nu)$, and some $\omega$-independent $C_{\mu',d}, \mu'>0$ and any configurations $X,Y\subset \Z^d$.

\subsection{Proof of Theorem \ref{thm:loc}}
First, in order to apply our Theorem \ref{thm:determinant}, we need to convert the two-point dynamical localization in expectation assumption \eqref{eq:2point_E} into a pointwise condition (see \eqref{eq:2point_P} below). This reduction goes back to Theorem 7.6 of \cite{DJLS}. 
We include it here for completeness.
Let  
\begin{align*}
Q(\omega):=\sum_{x, y\in \Z^d} (1+|x|)^{-d-1} e^{\frac{\mu}{2}|x-y|} \sup_{s,t\in I} |\la \delta_x, \rho(s,t, H_{\omega}) \delta_y\ra|.
\end{align*}
By \eqref{eq:2point_E}, we have
\begin{align*}
\E_{\omega}\, [Q(\omega)]\leq \sum_{x,y\in \Z^d} (1+|x|)^{-d-1} e^{-\frac{\mu}{2}|x-y|}<\infty,
\end{align*}
which implies $Q(\omega)<\infty$ for a.e. $\omega$. 
Hence for a.e. $\omega$, there exists $C_{\omega}>0$ such that 
\begin{align*}
\sum_{x, y\in \Z^d} (1+|x|)^{-d-1} e^{\frac{\mu}{2}|x-y|} \sup_{s,t\in I} |\la \delta_x, \rho(s,t, H_{\omega}) \delta_y\ra|<C_{\omega},
\end{align*}
which implies for any $x,y\in \Z^d$,
\begin{align}\label{eq:2point_P}
 \sup_{s,t\in I} |\la \delta_x, \rho(s,t, H_{\omega}) \delta_y\ra|<C_{\omega} (1+|x|)^{d+1} e^{-\frac{\mu}{2}|x-y|}.
\end{align}
Theorem \ref{thm:determinant} yields that
\begin{align*}
\sup_{s,t\in I^n} |\det(\la \delta_{x_j}, \rho(s_j, t_k, H_{\omega}) \delta_{y_k}\ra)| 
\leq C_{\omega,d,\mu}^n \left(\prod_{j=1}^n (1+|x_j|)^{d+1}\right)\, e^{-\frac{\mu}{4\sqrt{d}} D_s(X,Y)}
\end{align*}
as claimed.
\qed 

\section*{Acknowledgement}
R. H. is partially supported by NSF-DMS-2053285.
F. Y. is partially supported by an AMS-Simons Travel Grant. 
We would like to thank Michael Loss, Robert Sims and Simone Warzel for helpful comments, and Michael Loss for pointing out that an earlier version of our Theorem \ref{thm:determinant} also holds for permanents of matrices.

\bibliographystyle{amsplain}

\

\

\address{Rui Han, rhan@lsu.edu\\
Department of Mathematics,
Louisiana State University, Baton Rouge, USA}
\

\

\address{Fan Yang, ffyangmath@gmail.com\\
Department of Mathematics,
Louisiana State University, Baton Rouge, USA}

\end{document}